\newcommand{\bela}[1]{\begin{equation}\label{#1}}
\newcommand{\ela}{\end{equation}}
\newcommand{\bear}[1]{\begin{array}{#1}}
\newcommand{\ear}{\end{array}}
\renewcommand{\Psi}{\mbox{\boldmath $\psi$}}
\newcommand{\as}{\\[.6em]}
\renewcommand{\i}{\mbox{\rm i}}
\newcommand{\tr}{\,\mbox{tr}\,}
\newcommand{\va}[1]{{\tilde{#1}}}
\newtheorem{definition}[theorem]{Definition}
\newtheorem{observation}[theorem]{Observation}
\newtheorem{convention}[theorem]{Convention}
\begin{document}

\title[Generalized Clifford configurations and the qdSKP equation]{A novel 
generalization of Clifford's 
  classical point-circle configuration. Geometric interpretation of the 
  quaternionic discrete Schwarzian KP
  equation}

\author[W. K.\ Schief and B. G.\ Konopelchenko]{W. K. Schief$\,^{1,2}$ and 
B.\ G.\ Konopelchenko\,$^3$}

\affiliation{$^1$ Institut f\"ur Mathematik, Technische Universit\"at Berlin, 
Stra\ss e des 17. Juni 136, D-10623 Berlin, Germany \\[2mm]
$^2$ Australian Research Council Centre of Excellence for Mathematics and 
Statistics of Complex Systems, School of Mathematics, The University of New 
South Wales, Sydney, NSW 2052, Australia\\[2mm]
$^3$ Dipartimento di Fisica, Universit\`a del Salento and INFN, Sezione di
Lecce, 73100~Lecce, Italy}

\label{firstpage}

\maketitle

\begin{abstract}{Clifford configurations; quaternions; multi-ratios; 
integrable systems; discrete differential geometry}
 The algebraic and geometric properties of a novel 
 generalization of Clifford's classical $\mathcal{C}_4$ point-circle 
 configuration are analysed. A connection with the integrable quaternionic 
 discrete Schwarzian Kadomtsev-Petviashvili equation is revealed.
\end{abstract}

\section{Introduction}

In his seminal paper {\it Clifford's chain and its analogues in relation 
to the higher polytopes}, Longuet-Higgins (1972)
asserts that "a chain of theorems
... has exerted a peculiar fascination for mathematicians since its
discovery by Clifford in 1871". Indeed, various generalizations and analogues 
in higher dimension of Clifford's point-circle configurations $\mathcal{C}_n$
(Clifford 1871) associated with such luminaries as 
de Longechamps (1877), Cox (1891), 
Grace (1898), Brown (1954), Coxeter (1956) and 
Longuet-Higgins (1972) have been recorded and analysed in 
detail. The 
original celebrated chain of `circle theorems' may be stated as follows:

\medskip
\noindent
Given four straight lines on a plane, the four circumcircles of the four
triangles so formed are concurrent in a point $Q_4$, say (cf.\ figure
\ref{menelaus}).

\medskip
\noindent
Given five lines on a plane, by omitting each line in turn, we obtain
five corresponding points $Q_4$ and these lie on a circle $C_5$, say.

\medskip
\noindent
Given six lines on a plane, we obtain six corresponding circles $C_5$
and these are concurrent in a point $Q_6$.

\medskip
\noindent
Etc.

\medskip
\noindent
Generally, given $n$ coplanar lines, we obtain $n$ corresponding circles 
$C_{n-1}$ which are concurrent in a point $Q_n$ or $n$ points $Q_{n-1}$ 
which lie on a circle $C_n$ depending on whether $n$ is even or odd 
respectively.  

\medskip
\noindent
Finally, application of an inversion with respect to a 
generic point on the plane leads to a complete and symmetric configuration of
$2^{n-1}$ points and $2^{n-1}$ circles with $n$ points on every circle
and $n$ circles through every point.

\begin{figure}
\centerline{\includegraphics[width=0.5\textwidth]{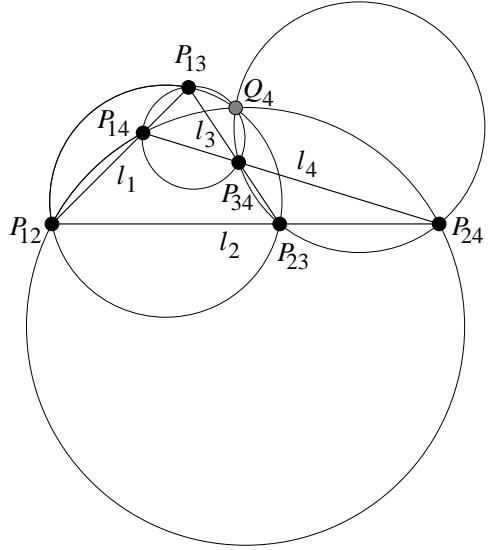}}
\caption{A `Menelaus configuration'}
\label{menelaus}
\end{figure}

\medskip
\noindent
The first theorem associated with four straight lines immediately demonstrates
that there exists a close connection between $\mathcal{C}_4$ Clifford 
configurations and the ancient Theorem of Menelaus (Pedoe 1970). The latter 
states that three points $P_{14},P_{24},P_{34}$ on the (extended) edges of
a triangle with vertices $P_{12},P_{23},P_{13}$ as displayed in 
figure~\ref{menelaus} are collinear if and only if 
the condition
\bela{I1}
  \frac{\overline{P_{12}P}\!_{24}\,\overline{P_{23}P}\!_{34}\,
  \overline{P_{13}P}\!_{14}}{
  \overline{P_{24}P}\!_{23}\,\overline{P_{34}P}\!_{13}\,
  \overline{P_{14}P}\!_{12}} = -1
\ela
for the associated directed lengths is satisfied. Accordingly, 
the points of intersection of the four lines $l_1,l_2,l_3,l_4$ in Clifford's
first theorem (see figure \ref{menelaus}) obey condition (\ref{I1}).
If the plane is identified with the complex plane then condition (\ref{I1}) 
constitutes a multi-ratio condition for the complex numbers $P_{ik}$ which we
denote by (cf.~\S 2)
\bela{I2}
  M(P_{14},P_{12},P_{24},P_{23},P_{34},P_{13}) = -1
\ela
(modulo a trivial cyclic permutation of the arguments).
The latter is evidently invariant under the group of inversive 
transformations (Brannan \textit{et al.} 1999) 
and hence the points $P_{ik}$ of a $\mathcal{C}_4$ Clifford 
configuration likewise satisfy the multi-ratio condition (\ref{I2}). In fact,
it has been pointed out in Konopelchenko \& Schief (2002) that the latter 
constitutes a defining property of $\mathcal{C}_4$ Clifford configurations.

In this paper, we investigate in detail the geometric and algebraic properties
of a novel generalization of Clifford's $\mathcal{C}_4$ configuration which has
been discovered via the theory of integrable systems (soliton theory)
(Ablowitz \& Segur 1981; Zakharov \textit{et al.} 1980). 
Thus, in Konopelchenko \& Schief (2002), it has been shown that
(\ref{I2}) interpreted as a lattice equation which is defined on the 
`octahedral' vertex configurations of a face-centred cubic (fcc) lattice 
(cf.\ \S 7) constitutes a Schwarzian
version (Dorfman \& Nijhoff 1991; Bogdanov \& Konopelchenko 1998) 
of the Hirota-Miwa equation, that is the discrete Kadomtsev-Petviashvili
(dKP) equation (Hirota 1981). The latter is regarded as a `master equation'
in soliton theory since it encodes the complete KP hierarchy of soliton 
equations via sophisticated continuum limits. The discrete Schwarzian KP
(dSKP) equation admits a natural multi-component analogue, namely 
the natural matrix generalization of the multi-ratio condition (\ref{I2}) 
interpreted as a lattice equation (Bogdanov \& Konopelchenko 1998). In the 
simplest case, the quaternionic dSKP (qdSKP) equation locally represents
a six-point relation for six quaternions $P_{ik}$, that is, a relation between
six points $P_{ik}$ in a four-dimensional Euclidean space~$\mathbbm{R}^4$ if the
standard identification $\mathbbm{R}^4\cong\mathbbm{H}$ with the algebra of
quaternions is made. Since the multi-ratio condition (\ref{I2}) encodes
$\mathcal{C}_4$ Clifford configurations, it is natural to inquire as to the
geometric significance of its quaternionic counterpart. It turns out that an
appropriate characterization of classical $\mathcal{C}_4$ Clifford 
configurations gives rise to natural analogues in a four-dimensional Euclidean 
space which are algebraically governed by the local qdSKP equation.

In the present context, the key property of classical $\mathcal{C}_4$ Clifford 
configurations turns out to be the Godt-Ziegenbein property which states that,
in a specific sense, the angles made by four oriented circles passing through a
point are the same for all eight points (Godt 1896; Ziegenbein 1941). This 
property is used in \S 3 to define octahedral point-circle 
configurations in $\mathbbm{R}^4$ of Clifford type. In \S 6, the existence 
of such generalized Clifford configurations is proven and it is demonstrated 
that these are indeed governed by the afore-mentioned quaternionic version of 
the multi-ratio condition (\ref{I2}). As a by-product, it is shown that, 
for any five generic points in $\mathbbm{R}^4$, there exists a pair of
associated generalized Clifford configurations which are related by reflection
in the hypersphere defined by the given five points. The final section is then
devoted to the geometry of the qdSKP (lattice) equation.

\section{The classical \boldmath $\mathcal{C}_4$ Clifford configuration}

We begin with the classical construction of $\mathcal{C}_4$ Clifford 
configurations. Thus, consider a point $P$ on the plane and four generic 
circles $S_1,S_2,S_3,S_4$ passing through $P$ as depicted in 
figure \ref{circle}. The
six additional points of intersection are labelled by $P_{12},P_{13},
P_{14},P_{23},P_{24},P_{34}$. Here, the indices on $P_{ik}$
correspond to those of the circles $S_i$ and $S_k$. Any three circles 
$S_i,S_k,S_l$ intersect at three points and therefore define a circle $S_{ikl}$
passing through these points. Clifford's circle theorem (Clifford 1871)
then states that the four circles $S_{123},S_{124},S_{134},S_{234}$ meet at a 
point $P_{1234}$. 
Even though Clifford configurations ($\mathcal{C}_n$) exist 
for any number of initial circles $S_1,\ldots,S_n$ passing through a point $P$,
for brevity, we here associate with the term `Clifford configuration' the case
$n=4$.  
\begin{figure}
\centerline{\includegraphics[width=\textwidth]{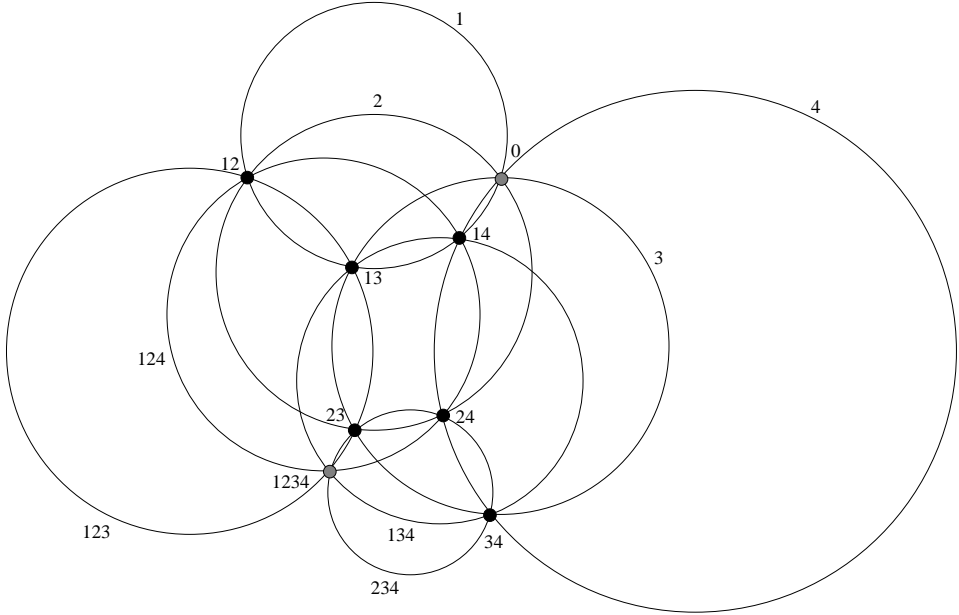}}
\caption{A classical $\mathcal{C}_4$ Clifford configuration}
\label{circle}
\end{figure}

In Konopelchenko \& Schief (2002), as an immediate consequence of the classical 
Theorem of Menelaus (Pedoe 1970), it has been demonstrated 
that any generic six 
points $P_{12},P_{13},P_{14}$, $P_{23},P_{24},P_{34}$ on the plane regarded 
as complex numbers belong to a Clifford configuration (with the 
above-mentioned combinatorics) if and only if they obey the multi-ratio 
condition
\bela{E1}
  M(P_{14},P_{12},P_{24},P_{23},P_{34},P_{13}) = -1,
\ela
where the multi-ratio of six complex numbers $P_1,\ldots,P_6$ is defined by
\bela{E2}
  M(P_1,P_2,P_3,P_4,P_5,P_6) = \frac{(P_1-P_2)(P_3-P_4)(P_5-P_6)}{
                                     (P_2-P_3)(P_4-P_5)(P_6-P_1)}.
\ela
In particular, this confirms the geometrically evident fact that 
any five generic points on the plane uniquely define a 
Clifford configuration. Indeed, for any given generic points
$P_{12},P_{13},P_{14},P_{23},P_{24}$, the sixth point $P_{34}$ is
determined by the linear equation (\ref{E1}). The latter point may lie `at
infinity', in which case the four circles passing through $P_{34}$ degenerate 
to straight lines. The justification of the ordering of the arguments in 
(\ref{E1}) is consigned to \S 6.

The preceding discussion indicates that one may think of a Clifford
configuration
as a configuration of six points  $P_{12},P_{13},P_{14},P_{23},P_{24},P_{34}$ 
and eight circles $S_1,S_2,S_3,S_4,S_{123},S_{124},S_{134},S_{234}$ which is
such that the four circles $S_1,S_2,S_3,S_4$ intersect
at a point $P$ or the four circles $S_{123},S_{124},S_{134},S_{234}$
intersect at a point $P_{1234}$. Clifford's circle theorem then guarantees the 
existence of the remaining point $P_{1234}$ or $P$ respectively. In this 
connection, it is noted that the eight points $P,\ldots,P_{1234}$ of a Clifford
configuration appear on an equal footing so that, at first sight, the above 
interpretation of a Clifford configuration does not seem to be natural. 
However, it turns out that it is precisely this point of view which allows for
a generalization of Clifford configurations in which, generically, the points 
$P$ and $P_{1234}$ do not exist.

A remarkable property of Clifford configurations is that the angles made by four
oriented circles passing through a point are the same for all eight points in a
sense to be specified in the following section. Therein, it is shown that this
Godt-Ziegenbein property (Godt 1896; Ziegenbein 1941) constitutes a 
defining property of Clifford configurations. In fact, it is sufficient to 
demand that the Godt-Ziegenbein property holds for the points 
$P_{12},P_{13},P_{14},P_{23},P_{24},P_{34}$. The latter observation serves as 
the basis for the definition of generalized Clifford configurations.  

\section{Octahedral point-circle configurations. Definitions and notation}

In the following, we are concerned with configurations in a 
four-dimensional Euclidean space $\mathbbm{R}^4$ consisting of six points and 
eight circles with three points on every circle and four circles through every 
point. More precisely (cf.\ figure~\ref{octahedron}):
\begin{definition}
  {\bf (Octahedral point-circle configurations)}
  A configuration of six points and eight circles in $\mathbbm{R}^4$ is termed
  an {\em octahedral point-circle configuration} if the combinatorics of the
  configuration is that of an octahedron, that is the points of the
  configuration correspond to the vertices of the octahedron while the circles
  correspond to the triangular faces.
\end{definition}
\begin{figure}
\centerline{\includegraphics[width=0.8\textwidth]{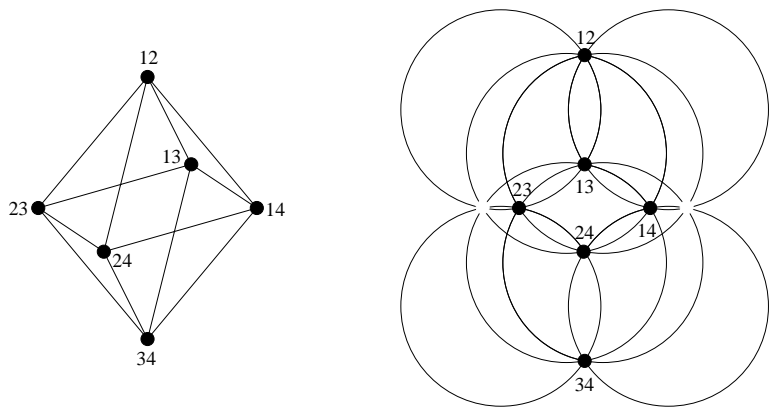}}
\caption{The combinatorics of an octahedral point-circle configuration}
\label{octahedron}
\end{figure}

In order to define octahedral point-circle configurations of Clifford type, it
is necessary to introduce a correspondence between circles which pass through
different points. To this end, we observe that any vertex of an octahedron may
be associated with its `opposite' counterpart, that is the vertex which is not
connected via an edge. Similarly, there exist four pairs of disconnected 
`opposite' faces. Thus, by virtue of the combinatorial correspondence
employed in the above definition, any point $P$ of an octahedral
point-circle configuration admits an `opposite' point $P^*$ and any circle $S$
is associated with an `opposite' circle $S^*$ (cf.\ figure \ref{opposite}).
\begin{figure}
\centerline{\includegraphics[width=0.8\textwidth]{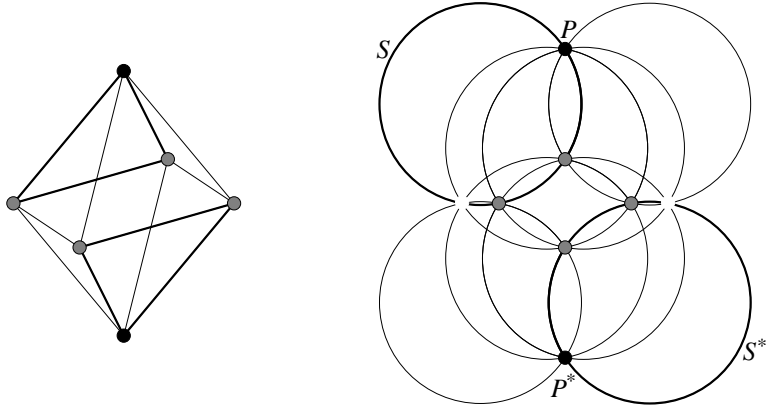}}
\caption{`Opposite' points and circles}
\label{opposite}
\end{figure}

\begin{definition}
  {\bf (Correspondence of circles)}
  Let $P,P_1,P_2$ and $S,S_1,S_2$ be points and circles of an octahedral
  point-circle configuration such that $S_1$ and $S_2$ intersect at $P_1$ and
  $P_2$ and $P$ lies on $S$. Then, the circles $S_1$ and $S_2$
  passing through $P_1$ are said to {\em correspond} to the circles $S_2$
  and $S_1$ respectively passing through $P_2$ and the circle $S$ passing
  through $P$ is said to {\em correspond} to the opposite circle $S^*$ passing
  through the opposite point $P^*$:
\bela{E3}
   (S_1,S_2;P_1,P_2)\leftrightarrow(S_2,S_1;P_2,P_1),\quad
   (S;P)\leftrightarrow(S^*;P^*). 
\ela
\end{definition}

Iterative application of the above correspondence principle immediately leads
to the following correspondence:

\begin{observation}
  Any circle $S_1$ of an octahedral point-circle configuration passing
  through a point $P_1$ admits five corresponding circles $S_2,\ldots, S_6$
  which pass through the remaining five points $P_2,\ldots,P_6$. Thus, there
  exists a unique correspondence between the six sets of four circles
  $S^{\mu}_k,\,\mu=1,2,3,4$ passing
  through the points $P_k$ of an octahedral point-circle configuration.
\end{observation}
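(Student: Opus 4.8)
The plan is to observe that the statement is purely combinatorial — it makes no use of the geometry of $\mathbbm{R}^4$ — and to reduce it to a finite check on the octahedron. An octahedral point-circle configuration has $6$ points, $8$ circles, and $6\cdot 4 = 8\cdot 3 = 24$ incident (circle, point) pairs, which I will call \emph{flags}. The correspondence principle of Definition (Correspondence of circles) attaches to each flag a small set of ``adjacent'' flags, and iterating it generates an equivalence relation on the $24$ flags. Two things must then be established: (i) each equivalence class contains exactly one flag over every point $P_1,\dots,P_6$, so that a circle $S_1$ through $P_1$ unambiguously determines corresponding circles $S_2,\dots,S_6$ through $P_2,\dots,P_6$; and (ii) the resulting partition distributes the four circles through each point into four labelled families $S^\mu_k$, $\mu=1,2,3,4$, with ``corresponds'' meaning ``carries the same label $\mu$''.

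First I would fix the standard coordinatisation of the octahedron: the points $P_k$ are the vertices $\pm e_1,\pm e_2,\pm e_3$, and the circles $S$ are indexed by sign vectors $\epsilon=(\epsilon_1,\epsilon_2,\epsilon_3)\in\{+,-\}^3$, the circle $\epsilon$ carrying the three points $\epsilon_1 e_1,\epsilon_2 e_2,\epsilon_3 e_3$. Then the opposite of the vertex $\epsilon_i e_i$ is $-\epsilon_i e_i$, the opposite of the circle $\epsilon$ is $-\epsilon$, two circles meet in two points iff their sign vectors differ in a single slot, and the four circles through $\epsilon_i e_i$ are exactly those with $i$-th sign $\epsilon_i$. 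A flag is thus a pair $(\epsilon;i)$ consisting of a circle $\epsilon$ and a distinguished coordinate $i$. Translating Definition (Correspondence of circles) into this language, the move $(S;P)\leftrightarrow(S^*;P^*)$ becomes $(\epsilon;i)\leftrightarrow(-\epsilon;i)$, while the move $(S_1,S_2;P_1,P_2)\leftrightarrow(S_2,S_1;P_2,P_1)$ becomes, for each $j\neq i$, the exchange $(\epsilon;i)\leftrightarrow(\epsilon';j)$, where $\epsilon'$ agrees with $\epsilon$ except that the sign in the remaining slot $l\notin\{i,j\}$ is reversed. Each of these moves is visibly its own inverse, so the relation is symmetric, and each flag has precisely three neighbours.

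I would then simply compute the orbit of one flag, say $(+,+,+;\,1)$, by applying the three moves repeatedly; it closes up after six flags, one over each of the vertices $+e_1,-e_1,+e_2,-e_2,+e_3,-e_3$, which is (i) for this orbit. Since the signed-permutation group of $\{\pm e_1,\pm e_2,\pm e_3\}$ acts transitively on the $24$ flags and the correspondence moves are equivariant under it, every orbit is a translate of this one; hence there are exactly $24/6=4$ orbits, each meeting every point once. Labelling the orbits $\mu=1,2,3,4$ and letting $S^\mu_k$ be the unique circle through $P_k$ in orbit $\mu$ yields the six families of four circles and the stated correspondence, whose uniqueness is simply the statement that this partition into orbits is canonical up to relabelling. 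The only substantive point — everything else is bookkeeping — is verifying that the orbit closes at size six rather than spreading over several flags sharing a point; equivalently, that the transitive closure of the elementary moves remains ``functional'' on points. This is precisely what the explicit orbit computation settles, so I would present that computation (three moves, a handful of steps, or if one prefers the four orbits worked out directly) as the core of the proof, with the coordinatisation and the flag-transitivity remark as the surrounding scaffolding.
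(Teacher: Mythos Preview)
Your argument is correct and is precisely the ``iterative application of the above correspondence principle'' that the paper invokes; the paper, however, treats the observation as immediate and gives no proof beyond that single introductory sentence, so your explicit flag coordinatisation and orbit computation supply considerably more detail than the original.
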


\begin{convention}
  {\bf (Orientation of circles)}
  The orientation of the circles of an octahedral point-circle configuration is
  chosen in such a manner that the corresponding orientation of the 
  faces of the octahedron is the same for all faces (when viewed from `outside')
  (cf.\ figure \ref{notation}).
\end{convention}
\begin{figure}
\centerline{\includegraphics[width=\textwidth]{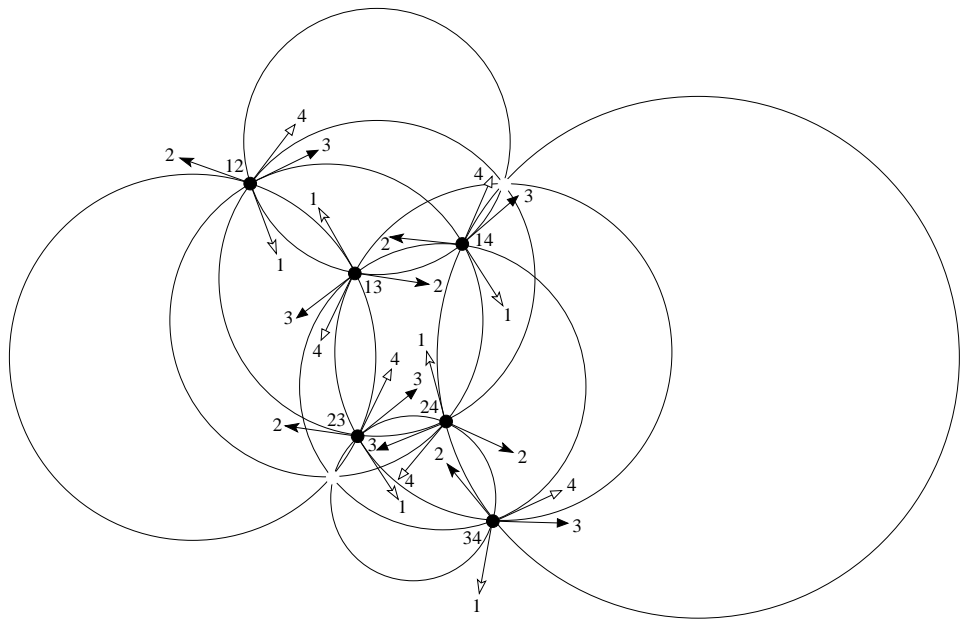}}
\caption{An admissible orientation of circles and tangent vectors}
\label{notation}
\end{figure}

\noindent
For completeness, it is remarked that the most general admissible orientation
of the circles is obtained by simultaneously changing the orientation of
a face of the octahedron and its three neighbours and then iterating this 
operation.

If we now demand that an oriented circle and its tangent vectors be of the same
orientation then the following definition is natural:

\begin{definition}
  {\bf (Angles between circles)}
  The angle made by two oriented circles $S$ and $S'$ passing through a point
  $P$ of an octahedral point-circle configuration is that made by the two
  corresponding tangent vectors $V$ and $V'$ at $P$, viz
\bela{E4}
   \angle(S,S') := \angle(V,V'). 
\ela
\end{definition}

We are now in a position to define an analogue of Clifford's classical
configuration:

\begin{definition}\label{cliff}
  {\bf (Generalized Clifford configurations)}
  An octahedral point-circle configuration is termed a {\em generalized Clifford
  configuration} if the six points $P_k$ are {\em equivalent} in the sense that
  for any six pairs of corresponding oriented circles $S_k,S'_k$ passing through
  $P_k$ the angle $\angle(S_k,S'_k)$ is independent of $k$.
\end{definition}

The following theorem demonstrates that the above definition is natural:

\begin{theorem} {\bf (Planar generalized Clifford configurations)}
  Generalized Clifford configurations on the plane coincide with classical
  Clifford configurations.
\end{theorem}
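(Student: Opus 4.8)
The plan is to show the two inclusions separately. For one direction, I would start with a classical $\mathcal{C}_4$ Clifford configuration, realised as six points $P_{ik}$ together with the eight circles $S_i$ through the common point $P$ and $S_{ikl}$ through $P_{1234}$. The task is to verify that it satisfies Definition~\ref{cliff}, i.e.\ that the angle between any pair of corresponding oriented circles is the same at all six vertices $P_{12},\dots,P_{34}$. The natural tool here is the inversive invariance of angles: an inversion centred at $P$ sends the four circles $S_i$ to four straight lines and turns the whole picture into a `Menelaus configuration' of four lines. In that image the tangent directions of the circles at each $P_{ik}$ become simply the directions of the two lines $l_i,l_k$, so the angle statement reduces to the trivial observation that two fixed lines $l_i$ and $l_k$ make the same angle wherever they meet something. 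The only real content is bookkeeping: one must check that the circle--correspondence of Definition~2.6/Observation~2.7, applied to a Clifford configuration, pairs up precisely the tangent vectors that come from the same pair of lines $l_i,l_k$ (with matching orientations as fixed by the orientation convention), so that `corresponding' angles are literally these line--line angles. This is where I expect the bulk of the work to sit.

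For the converse I would take a planar octahedral point-circle configuration satisfying the equivalence/Godt--Ziegenbein condition of Definition~\ref{cliff} and show it must be a classical Clifford configuration. The cleanest route is to reduce, via an inversion, to showing that the angle condition forces the eight circles to organise themselves as four pencils through a common point (equivalently, four lines after a suitable inversion), which is exactly the classical set-up. Concretely: pick one vertex, invert so that the four circles through it become lines; the equivalence condition then propagates the `constant angle' information around the octahedron, and one argues that at each remaining vertex the four circles through it must again be four lines making those same angles, hence all circles lie in four pencils. Once the configuration is of `four-line' type, the Menelaus theorem (condition~(\ref{I1})/(\ref{E1})) shows the six intersection points automatically fit a Clifford configuration, and the missing points $P$ and $P_{1234}$ exist by Clifford's circle theorem.

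An alternative, more computational proof of the converse stays in the complex plane throughout: parametrise the tangent directions of the circles at the $P_{ik}$ by the usual formula for the tangent to a circle through three complex points, write the angle-equality conditions as equalities of arguments of certain cross-ratio-like expressions in the $P_{ik}$, and show that the resulting system is equivalent to the multi-ratio condition $M(P_{14},P_{12},P_{24},P_{23},P_{34},P_{13})=-1$, which by the earlier discussion characterises classical Clifford configurations. I would likely present the synthetic inversive argument as the main proof and mention the multi-ratio computation as the bridge to~(\ref{E1}).

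The main obstacle, in either approach, is combinatorial rather than geometric: keeping the six `corresponding' families of four circles straight, and checking that the orientation convention (Convention~2.9) is compatible with reading the angles as unsigned (or consistently signed) line--line angles, so that `the angle is independent of $k$' really does say what the Godt--Ziegenbein property says. Once that dictionary is set up, the plane geometry — inversive invariance of angles plus Menelaus plus Clifford's circle theorem — closes the argument quickly.
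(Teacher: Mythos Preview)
Your forward direction is fine and amounts to a direct proof of the Godt--Ziegenbein property, which the paper simply cites.

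The converse, however, takes a different route from the paper and your synthetic version has a real gap. The paper does \emph{not} try to force the eight circles into four pencils. Instead it uses a short uniqueness argument: set aside $P_{34}$; the remaining five points determine the four circles through $P_{12}$ and hence all the pairwise angles; by the equivalence hypothesis these angles determine the remaining four circles and therefore $P_{34}$ uniquely. But the same five points also determine a unique \emph{classical} Clifford configuration (via the multi-ratio condition~(\ref{E1})), and since classical $\Rightarrow$ generalized by the first half, the two configurations must coincide.

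Your sketch, by contrast, inverts at a vertex $P_{12}$ so that the four circles through $P_{12}$ become lines, and then claims that ``at each remaining vertex the four circles through it must again be four lines''. This is not what happens after a single inversion: at an adjacent vertex such as $P_{13}$ only two of its four circles pass through $P_{12}$ and become lines, while at the opposite vertex $P_{34}$ none of its four circles do. So the step ``one argues that \ldots'' is carrying the entire weight of the argument, and it is not clear how the angle hypothesis alone forces the remaining circles to pass through a common point. Your computational alternative (angle equalities $\Leftrightarrow$ $M=-1$) would work, but it is essentially the planar specialisation of the paper's later theorem~\ref{main}; the paper deliberately avoids that machinery here, and the uniqueness trick is both shorter and logically prior.
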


\begin{proof} 
The Godt-Ziegenbein property 
(Godt 1896; Ziegenbein 1941) consists of the equivalence of the points
$P,\ldots,P_{1234}$ of a classical Clifford configuration. Thus, in particular,
the points $P_{12},P_{13},P_{14},P_{23},P_{24},P_{34}$ of any
given Clifford configuration are equivalent and hence the above definition of 
a generalized Clifford configuration is met. Conversely, let 
$P_{12},P_{13},P_{14},P_{23},P_{24},P_{34}$ be the six points of a planar
generalized Clifford configuration. If we set set aside the point $P_{34}$,
say, then it is evident that, due to the assumption of equivalence, $P_{34}$ may
be reconstructed from the other five points 
$P_{12},P_{13},P_{14},P_{23},P_{24}$. Indeed, since the four circles passing 
through $P_{12}$ are determined by these five points, the angles made by all
pairs of circles are known so that, in turn, the remaining 
four circles are uniquely determined. On the other hand, the above five points
also belong to a unique classical Clifford configuration as discussed in the 
previous section. The latter has the Godt-Ziegenbein property and hence 
coincides with the given generalized Clifford configuration.
\end{proof}

\section{Quaternions}

It is well known that the group of conformal transformations in $\mathbbm{R}^n$,
$n>2$ consists of translations, rotations, scalings and inversions 
(Dubrovin \textit{et al.} 1984). 
Accordingly, generalized Clifford configurations are objects of conformal 
geometry since circles are mapped to circles. 
In the current context, it is therefore natural to identify the four-dimensional 
Euclidean space $\mathbbm{R}^4$ with the algebra of quaternions~$\mathbbm{H}$
(Koecher \& Remmert 1991). 
Thus, we adopt the quaternionic representation
\bela{E5}
   \mathbbm{R}^4\ni (a,b,c,d)\quad \leftrightarrow\quad
               (a\mathbbm{1} + b\,\mathbbm{i} + c\,\mathbbm{j}
               + d\,\mathbbm{k})\in\mathbbm{H},
\ela
where the matrices $\mathbbm{1},\mathbbm{i},\mathbbm{j},\mathbbm{k}$ are
defined by
\bela{E6}
   \mathbbm{1} = \left(\bear{cc}1&0\\ 0&1\ear\right),\quad
   \mathbbm{i} = \left(\bear{cc}0&-\i\\ -\i&0\ear\right),\quad
   \mathbbm{j} = \left(\bear{cc}0&-1\\ 1&0\ear\right),\quad
   \mathbbm{k} = \left(\bear{cc}-\i&0\\ 0&\i\ear\right).
\ela
Then, the following properties and identities may be established.

Firstly, it is readily verified that
\bela{E7}
   |X| = \sqrt{\det X},\quad XX^{\dagger} = \det X\,\mathbbm{1},\quad
   <X,Y> = \frac{1}{2}\tr(XY^{\dagger}). 
\ela
In particular, if, for non-vanishing quaterninons, we denote the corresponding
unit vectors by
\bela{E8} 
  \hat{X} = \frac{X}{|X|}
\ela
then
\bela{E9}
  \cos\angle(X,Y) = \frac{1}{2}\tr(\hat{X}\hat{Y}^{\dagger})
                  = \frac{1}{2}\tr(\hat{X}\hat{Y}^{-1}) = \frac{1}{2}
                  \frac{\tr(XY^{\dagger})}{\sqrt{\det X}\sqrt{\det Y}}.
\ela
Secondly, Cayley's theorem (Koecher \& Remmert 1991) states that any element 
$\Omega$ of the orthogonal group $O(4)$ is represented by either
\bela{E10}
   X \mapsto \hat{A}X\hat{B}\quad {\rm or} \quad
   X \mapsto \hat{A}X^{\dagger}\hat{B},\qquad 
  A,B\in\mathbbm{H}\backslash \{ 0\},
\ela
depending on whether $\Omega$ is `proper' ($\det\Omega=1$) or `improper' ($\det
\Omega=-1$) respectively. Conversely, any quaternionic action of the above type
corresponds to an orthogonal mapping $\Omega$. In particular, the operation
\bela{E11}
   X \mapsto \hat{A}X^{\dagger}\hat{A},\quad A\in\mathbbm{H}
\ela
constitutes a reflection in the vector $A$. Finally, the group of conformal 
transformations is generated by the orientation-preserving
M\"obius transformations (Ahlfors 1981)
\bela{E12}
  \mathcal{M} : X \mapsto (AX + B)(CX+D)^{-1},\qquad A,B,C,D \in\mathbbm{H}
\ela
and the particular reflection (conjugation)
\bela{E13}
  \mathcal{C} : X \mapsto X^{\dagger}.
\ela

\section{Quaternionic cross- and multi-ratios}

The relevance of multi-ratios in connection with classical Clifford 
configurations has been indicated in \S 2. It turns out that generalized 
Clifford configurations may also be described algebraically in terms of 
quaternionic multi-ratios. It is recalled that the cross-ratio of four points 
in $\mathbbm{R}^4$ is usually taken to be (Ahlfors 1981)
\bela{E14}
  Q(P_1,P_2,P_3,P_4) = (P_1-P_2)(P_2-P_3)^{-1}(P_3-P_4)(P_4-P_1)^{-1}.
\ela
The cross-ratio is `real', that is
\bela{E15}
  Q(P_1,P_2,P_3,P_4) = a\mathbbm{1},\quad a\in \mathbbm{R},
\ela
if and only if the four points $P_1,\ldots,P_4$ lie on a circle.
The multi-ratio of six points may be defined as
\bela{E16}
  \bear{l}
   M(P_1,P_2,P_3,P_4,P_5,P_6)\as
   \quad=(P_1-P_2)(P_2-P_3)^{-1}(P_3-P_4)(P_4-P_5)^{-1}(P_5-P_6)(P_6-P_1)^{-1}
  \ear
\ela
or, alternatively,
\bela{E17} 
 \bear{l}
   \tilde{M}(P_1,P_2,P_3,P_4,P_5,P_6)\as
   \quad=(P_1-P_6)^{-1}(P_6-P_5)(P_5-P_4)^{-1}(P_4-P_3)(P_3-P_2)^{-1}(P_2-P_1).
 \ear
\ela
In the scalar case, the `right-multi-ratio' $M$ and the `left-multi-ratio' 
$\tilde{M}$ are evidently identical. However, in the 
quaternionic case, the two multi-ratios are related by 
\bela{E18}
   \tilde{M}(P_1,P_2,P_3,P_4,P_5,P_6) = (P_1-P_6)^{-1}
           M(P_6,P_5,P_4,P_3,P_2,P_1)(P_1-P_6).
\ela
This relation shows that the multi-ratio {\em conditions}
\bela{E19}
  M(P_1,P_2,P_3,P_4,P_5,P_6) = -\mathbbm{1}
\ela
and
\bela{E20}
  \tilde{M}(P_6,P_5,P_4,P_3,P_2,P_1) = -\mathbbm{1}
\ela
on six points $P_1,\ldots,P_6$ coincide. Furthermore, it is readily seen that
the M\"obius transformations $\mathcal{M}$ individually 
preserve the multi-ratio conditions (\ref{E19})
and
\bela{E21}
   \tilde{M}(P_1,P_2,P_3,P_4,P_5,P_6) = -\mathbbm{1},
\ela
while (\ref{E19}) and (\ref{E21}) are 
mapped to each other by the conformal transformations
$\mathcal{C}\circ\mathcal{M}$ which change the orientation. The geometric
significance of this fact in the context of generalized Clifford configurations
is discussed in the following section.

\section{Existence and algebraic description of generalized Clifford
configurations}

It has been shown in \S 2 that planar generalized Clifford configurations
are uniquely determined by five points. It turns out that the derivation
of an analogous statement in the general case is the key to an algebraic 
description of generalized Clifford configurations. To this end,
it is convenient to make a canonical choice of the tangent vectors to oriented
circles. Thus, since four points $X,A,B,C$ are concyclic if and only if their
cross-ratio is real, the function $X(s)$ defined by
\bela{E22}
  Q(X,A,B,C) = s\mathbbm{1},\quad s\in\mathbbm{R}
\ela
parametrizes the oriented circle $S_{A,B,C}$ which passes through $A,B,C$ as
$s$ increases with $X(0)=A,\,X(1)=B,\,X(\infty)=C$. Differentiation and
evaluation at $s=1$ then results in the tangent vector
\bela{E23}
  V_{A,B,C} = (C-B)(C-A)^{-1}(B-A) = (B-A)(C-A)^{-1}(C-B)
\ela
at the point $X=B$. The latter identity is merely a property of any three
matrices $A,B$ and $C$. Moreover, if two oriented circles $S_1$ and $S_2$ meet
at the points $P$ and $P'$ with associated tangent vectors $V_1$ and $V_2$ at
$P$ then the vectors $V_1'$ and $V_2'$ given by (cf.\ (\ref{E11}))
\bela{E24}
   V_1' = (P'-P)V_1^{-1}(P'-P),\quad
   V_2' = (P'-P)V_2^{-1}(P'-P)
\ela
are tangent to the circles $S_1$ and $S_2$ at $P'$ and the orientation of the
tangent vectors is preserved.

In order to proceed, we introduce a natural labelling of the points of an
octahedral point-circle configuration and the vertices of its underlying
octahedron. Thus, the six vertices of the octahedron are labelled by
$(ik)=(ki)$, $i\neq k\in \{1,2,3,4\}$  in such a way that opposite vertices
carry complementary indices and the corresponding points of the configuration
are denoted by $\Phi_{ik}=\Phi_{ki}$ throughout the remainder of this paper.

\begin{theorem}\label{lemma1} {\bf (`Uniqueness')}
  There exist at most two generalized Clifford configurations which share five
  points and the four associated circles.
\end{theorem}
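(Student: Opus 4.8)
The plan is to mimic the planar argument from Theorem~1 while keeping track of the extra freedom afforded by the four-dimensional ambient space. Suppose we are given five points, say $\Phi_{12},\Phi_{13},\Phi_{14},\Phi_{23},\Phi_{24}$, together with the four circles passing through each of them, forming part of a putative generalized Clifford configuration. The missing point is $\Phi_{34}$, and the four circles through it are exactly the circles $S^{\mu}_{34}$ obtained, via the correspondence principle of Definition~3 (and the Observation), from the four circles through the other five points. The task is to show that the equivalence requirement of Definition~\ref{cliff} pins down $\Phi_{34}$ up to (at most) two choices.

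The key step is to turn the angle conditions into an equation for $\Phi_{34}$ using the explicit tangent-vector formulae. Pick one of the given points, say $\Phi_{12}$, with its four circles; using the canonical tangent vectors $V_{A,B,C}$ from~(\ref{E23}) and the transport formula~(\ref{E24}) along the edges of the octahedron, every tangent vector at every point — in particular the four tangent vectors at the unknown point $\Phi_{34}$ — is expressed as an explicit rational (in the quaternionic sense) function of the five known points and of $\Phi_{34}$. The equivalence condition says that the six angles $\angle(S_k,S'_k)$ agree; by~(\ref{E9}) each such angle is $\tfrac12\tr(\hat V\hat V'^{\dagger})$, a real-valued function. Since the angles at the five known points are already fixed numbers, the content of the condition is that the angle made by a chosen pair of corresponding circles at $\Phi_{34}$ equals that prescribed value. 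I would first note that, because the circles through $\Phi_{34}$ are \emph{already determined} by the correspondence, demanding the matching angle is exactly demanding that $\Phi_{34}$ lie on the (unique) circle obtained by intersecting two of the transported circles $S^{\mu}_{34}$ — more precisely, one shows that fixing the angle between two of these four circles through $\Phi_{34}$ forces $\Phi_{34}$ to be one of the two intersection points of two \emph{specific} circles (one of them a transported image, the other reconstructed from the prescribed angle and the five points), hence at most two solutions.

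Concretely I would argue as follows. Reconstruct first the circle $S^{1}_{34}$ through $\Phi_{34}$: it is the transported image, under~(\ref{E24}) applied along the relevant octahedron edge, of one of the known circles through $\Phi_{24}$ (namely the one meeting the appropriate neighbour at the edge $(24)$--$(34)$), and the transport formula~(\ref{E24}) only needs the two endpoints of that edge and the old tangent vector — but it does \emph{not} need $\Phi_{34}$ itself to know which circle $S^{1}_{34}$ is, because $S^{1}_{34}$ is determined as the unique circle through the two already-known intersection points of the two neighbouring known circles. So $S^{1}_{34}$ is fully known. The same reasoning gives a second circle $S^{2}_{34}$ through $\Phi_{34}$, also fully known. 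Then $\Phi_{34}$ must be one of the (at most two) points of intersection of $S^{1}_{34}$ and $S^{2}_{34}$. This already yields the bound of two, \emph{provided} one checks that these two circles are genuinely determined without circular reference to $\Phi_{34}$; that bookkeeping — verifying that enough of the configuration is forced by the five points alone to reconstruct two of the circles through $\Phi_{34}$ — is where I expect the real work to lie, and it is essentially a careful walk around the octahedron using the Observation. Finally, I would remark that the two solutions are not spurious: the angle conditions involving the remaining pair of circles through $\Phi_{34}$ are automatically satisfied for \emph{both} intersection points (this is where the conformal invariance of the multi-ratio condition, and the fact that reflection in the hypersphere through the five points swaps the two solutions, enters), which is precisely the "pair of associated generalized Clifford configurations related by reflection" promised in the introduction — but for the present \emph{uniqueness} statement only the upper bound of two is needed, so I would stop once the two-intersection-point argument is complete.

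The main obstacle, then, is purely organizational rather than computational: showing that the correspondence principle propagates enough data from the five given points and their circles to reconstruct at least two of the four circles through the missing sixth point \emph{before} that point is known. Once that is established, the "at most two" conclusion is immediate from the fact that two distinct circles in $\mathbbm{R}^4$ meet in at most two points. I would therefore organize the proof around an explicit trip through the octahedron's combinatorics (Figures~\ref{octahedron} and~\ref{opposite}), recording at each vertex which tangent vectors are determined, and invoking~(\ref{E23}) and~(\ref{E24}) only at the end to make the reconstruction concrete.
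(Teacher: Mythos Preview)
Your proposal contains a genuine gap at its central step. You claim that two of the circles $S^{\mu}_{34}$ through the unknown point $\Phi_{34}$ are ``fully known'' from the five given points alone, and that $\Phi_{34}$ is then pinned down as one of their (at most two) intersection points. But this is not so: each circle through $\Phi_{34}$ passes through exactly two of the known points (e.g.\ $\Phi_{13}$ and $\Phi_{14}$), and in $\mathbbm{R}^4$ two points do \emph{not} determine a circle --- there is a two-parameter family of circles through any pair. The correspondence principle of Definition~2 is purely combinatorial: it tells you which face of the octahedron a circle is attached to, not where that circle sits in space. Likewise the transport formula~(\ref{E24}) cannot be invoked along the edge into $\Phi_{34}$ without already knowing $\Phi_{34}$. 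So the sentence ``$S^{1}_{34}$ is determined as the unique circle through the two already-known intersection points'' is simply false in four dimensions, and the whole ``intersect two known circles'' strategy collapses.

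The paper's argument is organised quite differently and is where the real content lies. One works not at $\Phi_{34}$ but at a neighbouring known point, say $\Phi_{13}$. Two of the tangent vectors there ($V^1_{13},V^2_{13}$) are transported from $\Phi_{12}$ via~(\ref{E24}) and hence known; the tangent $V^3_{13}$ to the (unknown) circle through $\Phi_{13},\Phi_{14},\Phi_{34}$ is not. The equivalence hypothesis is now used \emph{at the known points} $\Phi_{13}$ and $\Phi_{14}$: it yields three angle identities of the form~(\ref{E9}), which become three \emph{linear} equations for the unit vector $\hat{\Delta}=(\Phi_{13}-\Phi_{12})\widehat{(V^3_{13})^{-1}}(\Phi_{13}-\Phi_{12})$. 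A unit vector in $\mathbbm{R}^4$ subject to three independent linear constraints has at most two solutions, and each solution then determines $V^3_{13}$, hence (via linear dependence of the four tangents and one further angle) $V^4_{13}$, and finally $\Phi_{34}$. There is also a degenerate case --- the five points lying on a $2$-sphere --- which the paper handles by a separate parity argument (any non-planar solution would come with its mirror image, contradicting the bound of two) and which your outline does not address at all. The missing idea, in short, is to convert the equivalence condition into a linear system for an unknown \emph{tangent direction at a known point}, rather than trying to locate unknown circles directly.
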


\begin{proof}
For convenience, we label the five common points by
$\Phi_{12},\Phi_{13},\Phi_{14},\Phi_{23},\Phi_{24}$ and regard the
point(s) $\Phi_{34}$ as unknown. Accordingly, only the four circles 
$S^{\mu}_{12}$, $\mu=1,2,3,4$ passing through the point $\Phi_{12}$ are known. 
The tangent vectors to these circles are linearly dependent and may be chosen to 
be
\bela{E25}
   V_{12}^1 = V_{14,12,13},\quad V_{12}^2 = V_{13,12,23},\quad
   V_{12}^3 = V_{24,12,14},\quad V_{12}^4 = V_{23,12,24}
\ela
as indicated in figure \ref{notation}. Here, the notation $V_{14,12,13}=
V_{\Phi_{14},\Phi_{12},\Phi_{13}}$ etc.\ has been adopted. In the generic case, 
any three of these four vectors span the three-dimen\-sional tangent hyperplane 
to the hypersphere at $\Phi_{12}$
defined by the five points \linebreak
$\Phi_{12},\Phi_{13},\Phi_{14},\Phi_{23},\Phi_{24}$. 
By virtue of the correspondence principle and the 
orientation convention, the orientation of the eight circles is now defined. 

The vectors
\bela{E26}
 \bear{c}
   V^1_{13} = (\Phi_{13}-\Phi_{12})(V^2_{12})^{-1}(\Phi_{13}-\Phi_{12})\as
   V^2_{13} = (\Phi_{13}-\Phi_{12})(V^1_{12})^{-1}(\Phi_{13}-\Phi_{12})
   \ear
\ela
are tangent to the circles $S^1_{13}=S^2_{12}$ and $S^2_{13}=S^1_{12}$
respectively at the point $\Phi_{13}$, while
\bela{E27}
   V^1_{14} = (\Phi_{14}-\Phi_{12})(V^3_{12})^{-1}(\Phi_{14}-\Phi_{12})
\ela
constitutes a tangent vector to the circle $S^1_{14}=S^3_{12}$ at the point
$\Phi_{14}$. Now, in order to determine the two remaining tangent vectors at the
point $\Phi_{13}$, we make use of the assumption that, in particular, the points 
$\Phi_{12},\Phi_{13}$ and $\Phi_{14}$ are equivalent. Thus, the circle 
$S^3_{13}$ which passes through the points
$\Phi_{13}$ and $\Phi_{14}$ gives rise to the relations
\bela{A1}
  \angle(S^3_{13},S^1_{13}) = \angle(S^3_{12},S^1_{12}),\quad
   \angle(S^3_{13},S^2_{13}) = \angle(S^3_{12},S^2_{12}),
\ela
while the circle $S^2_{14}=S^3_{13}$ is associated with the additional
relation
\bela{A2}
  \angle(S^2_{14},S^1_{14}) = \angle(S^2_{12},S^1_{12}),
\ela
where $S^1_{14}=S^3_{12}$. If the tangent vector to $S^3_{13}$ at the point
$\Phi_{13}$ is denoted by~$V^3_{13}$ then the vector
\bela{E28}
   V^2_{14} = (\Phi_{14}-\Phi_{13})(V^3_{13})^{-1}(\Phi_{14}-\Phi_{13})
\ela
is tangent to the circle $S^2_{14}$ at the point $\Phi_{14}$. Accordingly, the
conditions (\ref{A1}) and (\ref{A2}) translate into
\bela{E29}
  \bear{l}
    \tr[\hat{V}^3_{13}(\hat{V}^1_{13})^{\dagger}] =
    \tr[\hat{V}^3_{12}(\hat{V}^1_{12})^{\dagger}]\as
    \tr[\hat{V}^3_{13}(\hat{V}^2_{13})^{\dagger}] =
    \tr[\hat{V}^3_{12}(\hat{V}^2_{12})^{\dagger}]\as
    \tr[\hat{V}^2_{14}(\hat{V}^1_{14})^{\dagger}] =
    \tr[\hat{V}^2_{12}(\hat{V}^1_{12})^{\dagger}]
   \ear
\ela
which may be written as
\bela{L1}
  \bear{rl}
    \tr[\hat{\Delta}(\hat{V}^2_{12})^{-1}] &=
    \tr[\hat{V}^1_{12}(\hat{V}^3_{12})^{-1}]\as
    \tr[\hat{\Delta}(\hat{V}^1_{12})^{-1}] &=
    \tr[\hat{V}^2_{12}(\hat{V}^3_{12})^{-1}]\as
    \tr[\hat{\Delta}(\hat{V}^1_{12})^{-1}\hat{V}^3_{12}(\hat{V}^1_{12})^{-1}] &=
    \tr[\hat{V}^2_{12}(\hat{V}^1_{12})^{-1}]
   \ear
\ela
with the definition
\bela{D1}
  \Delta = (\Phi_{13}-\Phi_{12})(V^3_{13})^{-1}(\Phi_{13}-\Phi_{12}).
\ela
The constraints (\ref{L1}) constitute a linear system for the unit vector
$\hat{\Delta}$. Hence, there are two cases to distinguish:
\medskip

\noindent
{\bf Case 1:} If the tangent vectors $V^{\mu}_{12}$ span a two-dimensional plane
then the five points 
$\Phi_{12},\Phi_{13},\Phi_{14},\Phi_{23},\Phi_{24}$ necessarily lie on a
2-sphere (or a plane). On use of a conformal transformation, this 
2-sphere may be mapped to a plane so that we are left with the consideration
of generalized Clifford configurations in a three-dimensional Euclidean space 
subject to
the five points $\Phi_{12},\Phi_{13},\Phi_{14},\Phi_{23},\Phi_{24}$ being 
co-planar. If there exists a generalized Clifford configuration for which
$\Phi_{34}$ does not lie on the corresponding plane $\Sigma$ then reflection of
$\Phi_{34}$ in $\Sigma$ produces another generalized Clifford 
configuration.
Furthermore, the classical Clifford configuration defined uniquely by the
five points constitutes a third (planar) generalized Clifford configuration.
Thus, the number of distinct generalized Clifford configurations 
sharing five co-planar points is odd.

On the other hand, it is readily seen that the rank of 
the linear system (\ref{L1}) is 2 since the tangent vectors
$V^1_{12},V^2_{12},V^3_{12}$ are linearly dependent and hence there exist at 
most two solutions $\hat{\Delta}$.
Any specific choice of $\hat{\Delta}$ determines the tangent vector
$V^3_{13}$ up to its magnitude. Moreover, since the angles between the vectors
$V^1_{13},V^2_{13},V^3_{13}$ and the vector $V^4_{13}$ which is tangent to the
circle $S^4_{13}$ passing through the points $\Phi_{23}$ and $\Phi_{13}$ are
known and the four vectors $V^{\mu}_{13}$ must be linearly dependent, the
direction of the tangent vector $V^4_{13}$ is uniquely determined. This, in 
turn, shows that the point $\Phi_{34}$ is unique. Thus, there exist at most two
generalized Clifford configurations. We therefore conclude that the 
above-mentioned classical Clifford configuration is the only
generalized Clifford configuration under the current assumption.
\medskip

\noindent
{\bf Case 2:} If the tangent vectors $V^{\mu}_{12}$ span a three-dimensional 
vector space then the tangent vectors $V^1_{12},V^2_{12},V^3_{12}$ are linearly
independent without loss of generality. Accordingly, the rank of the linear 
system (\ref{L1}) is 3 and the corresponding two solutions are given by
\bela{D2}
  \hat{\Delta}_1 = \hat{V}^1_{12}(\hat{V}^3_{12})^{-1}\hat{V}^2_{12}\quad
  {\rm and}\quad
  \hat{\Delta}_2 = \hat{V}^2_{12}(\hat{V}^3_{12})^{-1}\hat{V}^1_{12}.
\ela
It is noted that the above solutions are distinct since equality implies 
that \linebreak
$[\hat{V}^1_{12}(\hat{V}^3_{12})^{-1},\hat{V}^2_{12}(\hat{V}^3_{12})^{-1}]=0$
and hence $V^1_{12},V^2_{12},V^3_{12}$ are linearly dependent. In fact, since
the projections of $\hat{\Delta}_1$ and $\hat{\Delta}_2$ onto 
$V^1_{12}, V^2_{12}$ and $V^3_{12}$ coincide, the unit vectors 
$\hat{\Delta}_1$ and $\hat{\Delta}_2$ are related by reflection in the 
three-dimensional vector space spanned by $V^1_{12},V^2_{12},V^3_{12}$. As in 
the previous case, any specific choice of $\hat{\Delta}$ now determines the 
point $\Phi_{34}$ uniquely so that there exist at most two generalized Clifford
configurations.
\end{proof}

Remarkably, the above analysis implies that generalized Clifford configurations
in three-dimensional Euclidean spaces or spheres are essentially 
two-dimensional.

\begin{theorem}
  Any generalized Clifford configuration in a three-dimensional Euclidean space
  $\mathbbm{R}^3$ or a three-dimensional sphere is either planar or 
  confined to a two-dimensional sphere and may therefore be
  mapped to a classical Clifford configuration by means of a suitable conformal
  transformation.
\end{theorem}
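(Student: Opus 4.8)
The plan is to reduce the three-dimensional case to the two-dimensional one by exhibiting a sphere or plane containing all six points $\Phi_{ik}$, and then to invoke Theorem~\ref{lemma1} (`Uniqueness') together with the first theorem on planar generalized Clifford configurations. I would argue by contradiction: suppose a generalized Clifford configuration lives in $\mathbbm{R}^3$ (or $S^3$) but is neither planar nor confined to a $2$-sphere. After applying a conformal transformation I may assume the ambient space is $\mathbbm{R}^3$, and, if convenient, that one of the six points is at infinity so that the four circles through the remaining points become accessible as lines or circles in an affine chart. The goal is to locate five of the six points on a common $2$-sphere $\Sigma$ and show the sixth must lie on $\Sigma$ as well.

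The first step is the observation already embedded in the proof of Theorem~\ref{lemma1}: pick any point, say $\Phi_{12}$, and consider the four tangent vectors $V^{\mu}_{12}$, $\mu=1,2,3,4$, to the circles through $\Phi_{12}$. These four vectors are linearly dependent. In the ambient space $\mathbbm{R}^3$ the tangent directions at $\Phi_{12}$ live in a $3$-dimensional space, so the span of the $V^{\mu}_{12}$ is either $2$- or $3$-dimensional. If it is $2$-dimensional for some vertex, then the four circles through that vertex are coplanar (or cospherical) with a common $2$-sphere, and by the correspondence principle all eight circles — hence all six points — lie on that $2$-sphere, and we are done. So I would assume that at \emph{every} vertex the four tangent vectors span all of $\mathbbm{R}^3$. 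This is exactly \textbf{Case~2} of the previous proof, and there the two admissible solutions $\hat\Delta_1,\hat\Delta_2$ were found to be related by reflection in the $3$-dimensional space spanned by $V^1_{12},V^2_{12},V^3_{12}$ — but in $\mathbbm{R}^3$ that spanning space \emph{is} the whole ambient space, so the "reflection" is the identity and $\hat\Delta_1=\hat\Delta_2$, contradicting the fact (also established there) that the two solutions are distinct.

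Hence Case~2 is vacuous in three dimensions, and the only possibility left is Case~1, in which the five chosen points lie on a $2$-sphere (or plane) $\Sigma$. It remains to show the sixth point $\Phi_{34}$ also lies on $\Sigma$. Here I would again use the reflection argument from Case~1 of Theorem~\ref{lemma1}: reflection of $\Phi_{34}$ in $\Sigma$ produces another generalized Clifford configuration, and the five points also determine a unique classical (planar) Clifford configuration sitting inside $\Sigma$; by the `Uniqueness' theorem there are at most two generalized Clifford configurations sharing the five points. If $\Phi_{34}\notin\Sigma$, these three configurations — $\Phi_{34}$, its mirror image, and the classical one — would have to be at most two, forcing $\Phi_{34}$ to coincide with its mirror image, i.e.\ $\Phi_{34}\in\Sigma$, a contradiction. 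Therefore all six points lie on $\Sigma$, the configuration is two-dimensional, and a conformal transformation carries $\Sigma$ to a plane; by the first theorem the configuration is then a classical Clifford configuration.

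The step I expect to be the main obstacle is making the dimension count at a generic vertex airtight and genuinely \emph{uniform}: one must be sure that the alternative "Case~2 at some vertex / Case~1 at another vertex" cannot occur in a way that escapes both arguments, and that the degenerate sub-cases (a point at infinity, three collinear points, tangent circles) are covered rather than silently excluded by the "generic" hypotheses used in Theorem~\ref{lemma1}. I would handle this by noting that membership of all points on a common $2$-sphere is a conformally invariant, closed condition, so it suffices to verify it on a dense open set of configurations, where the genericity assumptions of the uniqueness theorem are available; the general case then follows by a limiting argument.
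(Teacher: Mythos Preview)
Your overall strategy---reduce to $\mathbbm{R}^3$, split according to the dimension of $\mbox{span}\{V^\mu_{12}\}$, and invoke the two cases from the proof of Theorem~\ref{lemma1}---is exactly the paper's route. Case~1 already contains the full conclusion (the classical planar configuration is the \emph{only} one when the five points are cospherical), so your separate argument that $\Phi_{34}\in\Sigma$ is redundant, though correct.

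The genuine gap is in your treatment of Case~2. You write that ``in $\mathbbm{R}^3$ that spanning space is the whole ambient space, so the reflection is the identity and $\hat\Delta_1=\hat\Delta_2$''. This would be valid only if $\hat\Delta_1,\hat\Delta_2$ were known to lie in $\mathbbm{R}^3$---but they are defined by the quaternionic products $\hat\Delta_1=\hat V^1_{12}(\hat V^3_{12})^{-1}\hat V^2_{12}$ and $\hat\Delta_2=\hat V^2_{12}(\hat V^3_{12})^{-1}\hat V^1_{12}$, and a triple quaternionic product of vectors in a $3$-plane of $\mathbbm{H}$ generally escapes that $3$-plane (e.g.\ $\mathbbm{i}\cdot\mathbbm{1}^{-1}\cdot\mathbbm{j}=\mathbbm{k}$). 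So the reflection in $E=\mbox{span}\{V^1_{12},V^2_{12},V^3_{12}\}$ is a reflection in a hyperplane of $\mathbbm{R}^4$, not the identity, and your contradiction does not follow as stated.

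The paper repairs this by explicitly regarding $\mathbbm{R}^3$ as a hyperplane $E\subset\mathbbm{R}^4=\mathbbm{H}$ and then arguing in the opposite direction: since $\hat\Delta_1$ and $\hat\Delta_2$ are distinct and exchanged by reflection in $E$, neither lies in $E$. On the other hand, the quantity $\Delta=(\Phi_{13}-\Phi_{12})(V^3_{13})^{-1}(\Phi_{13}-\Phi_{12})$ attached to an actual configuration inside $E$ \emph{must} lie in $E$: by (\ref{E11}) this is, up to a positive scalar, the reflection of $V^3_{13}\in E$ in the line through $\Phi_{13}-\Phi_{12}\in E$, and such a reflection preserves any linear subspace containing that line. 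Hence no $\hat\Delta\in E$ can match either candidate, and Case~2 is impossible. Once you insert this embedding step your argument goes through; the limiting/genericity worries in your final paragraph are then unnecessary.
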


\begin{proof}
Since any generalized Clifford configuration in a 
three-dimensional hypersphere may be mapped via a conformal transformation to a 
generalized Clifford configuration in a three-dimensional subspace of 
$\mathbbm{R}^4$, we may confine ourselves to the
case of a generalized Clifford configuration in 
$\mathbbm{R}^3$ which we denote by $E$ and regard 
as being embedded in $\mathbbm{R}^4$. As in the preceding proof, we consider the
five points $\Phi_{12},\Phi_{13},\Phi_{14},\Phi_{23},\Phi_{24}$ and set aside 
the point $\Phi_{34}$. If the tangent vectors $V^{\mu}_{12}$ span a 
two-dimensional plane then the generalized Clifford configuration is planar or
confined to a 2-sphere as shown above. Hence, we may assume without loss of
generality that $E=\mbox{span}\{V^1_{12},V^2_{12},V^3_{12}\}$. This implies, 
in turn, that the two solutions
$\hat{\Delta}_1$ and $\hat{\Delta}_2$ of the linear system (\ref{L1}) are 
distinct and are related by reflection in $E$. Hence, 
$\hat{\Delta}_1,\hat{\Delta}_2\not\in E$.
\end{proof}

The key observation is now the following:

\begin{theorem}\label{main}
  {\bf (Multi-ratio description of generalized Clifford configurations)}
  The points $\Phi_{ik}$ of a generalized Clifford configuration are related by
  either
  \bela{M1}
    M(\Phi_{14},\Phi_{12},\Phi_{24},\Phi_{23},\Phi_{34},\Phi_{13}) =
    -\mathbbm{1}
  \ela
  or
  \bela{M2}
    \tilde{M}(\Phi_{14},\Phi_{12},\Phi_{24},\Phi_{23},\Phi_{34},\Phi_{13}) =
    -\mathbbm{1}.
  \ela
  Conversely, any six points $\Phi_{ik}$ of an octahedral point-circle
  configuration which obey either of the multi-ratio
  conditions (\ref{M1}) or (\ref{M2}) constitute the points of a
  generalized Clifford configuration.
\end{theorem}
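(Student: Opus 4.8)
The plan is to establish both directions by reducing the angle-equivalence conditions to the multi-ratio conditions via the explicit tangent-vector formulae of \S 6. The starting point is the uniqueness result (Theorem \ref{lemma1}): five generic points $\Phi_{12},\Phi_{13},\Phi_{14},\Phi_{23},\Phi_{24}$ and their four circles determine at most two generalized Clifford configurations, with the sixth point $\Phi_{34}$ obtained from one of the two solutions $\hat\Delta_1,\hat\Delta_2$ of the linear system (\ref{L1}), given in closed form by (\ref{D2}) in the generic (Case~2) situation. So the bulk of the work is: (i) translate each admissible $\hat\Delta$ into an explicit expression for $\Phi_{34}$ in terms of the five given points, and (ii) verify that the resulting point satisfies (\ref{M1}) for $\hat\Delta_1$ and (\ref{M2}) for $\hat\Delta_2$ (or vice versa). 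For the converse, I would run this computation in reverse: given six points satisfying (\ref{M1}), construct the eight circles through the tangent-vector prescription (\ref{E23})--(\ref{E24}) and check that all six equivalence conditions of Definition \ref{cliff} hold.

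Concretely, I would first unravel the chain of definitions (\ref{D1}), (\ref{E28}), (\ref{E26}) so that $V^3_{13}$, and hence $\Delta$, is expressed in terms of $\Phi_{13}-\Phi_{12}$ and the known tangent vectors $V^\mu_{12}$ from (\ref{E25}), which in turn are built from differences of the five given points via $V_{A,B,C}=(C-B)(C-A)^{-1}(B-A)$. Setting $\hat\Delta$ equal to $\hat\Delta_1=\hat V^1_{12}(\hat V^3_{12})^{-1}\hat V^2_{12}$ (up to a real scalar, which cancels in all angle conditions and in the final reconstruction of $\Phi_{34}$) and substituting the expressions (\ref{E25}) for $V^1_{12}=V_{14,12,13}$, $V^2_{12}=V_{13,12,23}$, $V^3_{12}=V_{24,12,14}$, I expect the right-hand side to collapse — after repeated use of the identity $(C-B)(C-A)^{-1}(B-A)=(B-A)(C-A)^{-1}(C-B)$ and cancellation of paired factors — into a product of six differences of the $\Phi_{ik}$, namely precisely the multi-ratio $M(\Phi_{14},\Phi_{12},\Phi_{24},\Phi_{23},\Phi_{34},\Phi_{13})$ set equal to $-\mathbbm{1}$. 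The second solution $\hat\Delta_2=\hat V^2_{12}(\hat V^3_{12})^{-1}\hat V^1_{12}$, which differs from $\hat\Delta_1$ by reflection in $\mathrm{span}\{V^1_{12},V^2_{12},V^3_{12}\}$, should by the same computation (with the roles of the left- and right-sided factors interchanged) yield the left multi-ratio condition (\ref{M2}); here the relation (\ref{E18}) between $M$ and $\tilde M$ under conjugation is the natural bookkeeping device. The Case~1 situation (five concyclic points) reduces, via a conformal transformation and Theorem~4, to the classical planar case covered by \S 2, where (\ref{M1}) reads as the scalar multi-ratio condition $M=-1$ of Menelaus/Clifford, so nothing new is needed there. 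Once one direction of the equivalence is in hand, the converse follows because the multi-ratio condition (\ref{M1}) is an \emph{affine-linear} equation in $\Phi_{34}$ (as already noted for the scalar case after (\ref{E2})) with a unique solution, which must therefore coincide with one of the at-most-two generalized Clifford configurations supplied by Theorem \ref{lemma1}; conformal invariance of the multi-ratio conditions under $\mathcal{M}$ and the interchange under $\mathcal{C}\circ\mathcal{M}$ (\S 5) guarantee the two cases (\ref{M1}) and (\ref{M2}) are exactly the two configurations.

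The main obstacle I anticipate is the noncommutativity of $\mathbbm{H}$ in step (i)--(ii): the cancellation that trivially produces the multi-ratio in the scalar case must be carried out with every factor kept in order, and the two candidate solutions $\hat\Delta_1,\hat\Delta_2$ will differ only in the side on which certain quaternionic factors sit, so one must be scrupulous about which of $M$ versus $\tilde M$ emerges. A secondary subtlety is verifying that, in reconstructing $\Phi_{34}$ from $\hat\Delta$, the \emph{direction} of the last tangent vector $V^4_{13}$ (tangent to $S^4_{13}$ through $\Phi_{23}$ and $\Phi_{13}$) is consistent with the orientation convention — this is exactly the point, deferred in \S 2 after (\ref{E1}), that fixes the ordering of the arguments in the multi-ratio; I would handle it by tracking orientations through (\ref{E24}) and the correspondence principle (\ref{E3}). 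Once these are pinned down, the remaining checks — that the same $\Phi_{34}$ also satisfies the equivalence conditions at $\Phi_{23}$ and $\Phi_{24}$, not just at $\Phi_{12},\Phi_{13},\Phi_{14}$ — should be automatic from the symmetry of the multi-ratio condition under the relevant permutations of indices combined with the already-established uniqueness.
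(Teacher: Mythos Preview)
Your overall strategy is in the right neighbourhood, but the logical architecture is inverted relative to the paper and, as written, has a genuine gap. You propose to prove ``generalized Clifford $\Rightarrow$ multi-ratio'' first (by computing $\Phi_{34}$ from $\hat\Delta_i$ and checking it satisfies (\ref{M1}) or (\ref{M2})), and then argue the converse by a counting argument: the unique solution of the affine-linear equation (\ref{M1}) ``must therefore coincide with one of the at-most-two generalized Clifford configurations supplied by Theorem~\ref{lemma1}.'' This does not follow. Theorem~\ref{lemma1} is purely a non-existence statement --- it says there are \emph{at most} two generalized Clifford configurations, not that there \emph{are} two. Knowing that every generalized Clifford configuration satisfies a multi-ratio condition does not tell you that a multi-ratio solution is a generalized Clifford configuration unless you already know such configurations exist. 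Your earlier remark (``For the converse, I would run this computation in reverse \ldots\ and check that all six equivalence conditions hold'') is the correct route, but you then abandon it for the flawed shortcut.

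The paper closes this gap by reversing the order: it proves directly that the multi-ratio condition implies the generalized Clifford property, and then the forward direction follows from Theorem~\ref{lemma1} by counting (two multi-ratio conditions produce two configurations, and there are at most two). The key device you are missing is a pair of explicit angle-preserving linear maps
\[
  \mathcal{O}_i : X \mapsto (\Phi_{13}-\Phi_{12})(V^{a_i}_{12})^{-1}X(V^{b_i}_{12})^{-1}(\Phi_{13}-\Phi_{12}),\qquad \{a_i,b_i\}=\{1,2\},
\]
which by construction send $V^1_{12},V^2_{12}$ to $V^1_{13},V^2_{13}$, and which send $V^3_{12},V^4_{12}$ to $V^3_{13},V^4_{13}$ precisely when certain multi-ratios equal $-\mathbbm{1}$. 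The equivalence of those multi-ratio conditions with (\ref{M1}) and (\ref{M2}) is the content of a separate permutation lemma (invariance of (\ref{M1}) and (\ref{M2}) under $S_4$ acting on the indices $1,2,3,4$), which you gesture at but would need to prove. Once $V^\mu_{13}=\mathcal{O}_i(V^\mu_{12})$ for all $\mu$, equivalence of $\Phi_{12}$ and $\Phi_{13}$ is immediate since $\mathcal{O}_i$ preserves angles; the same permutation lemma then propagates equivalence to all six points. This organization avoids the messy explicit reconstruction of $\Phi_{34}$ from $\hat\Delta$ that you anticipate, and it is where the noncommutative bookkeeping is cleanly localized.
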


Before we prove the theorem, it is observed
that the above multi-ratio conditions are representatives of two equivalence
classes of multi-ratio conditions which may be imposed on an octahedral
point-circle configuration. Specifically, let us consider an oriented hexagon 
$(\alpha_1,\alpha_2,\alpha_3,\alpha_4,\alpha_5,\alpha_6)$ with fixed
`initial' vertex $\alpha_1$ formed by six 
edges of an octahedron with distinct vertices $\alpha_i$ in such a way that any 
two adjacent edges of the hexagon belong to a triangular face of the
octahedron (cf.\ figure \ref{hexagon}) and impose the multi-ratio condition
\bela{E30}
  M(P_{\alpha_1},P_{\alpha_2},P_{\alpha_3},P_{\alpha_4},P_{\alpha_5},
    P_{\alpha_6}) = -\mathbbm{1}
\ela
on the points $P_{\alpha_i}$ of an associated octahedral point-circle 
configuration. There exist $48$ such hexagons and in the planar case the
associated multi-ratio conditions are equivalent. However, the situation is
different in the (generic) quaternionic case.
\begin{figure}
\centerline{\includegraphics[width=0.8\textwidth]{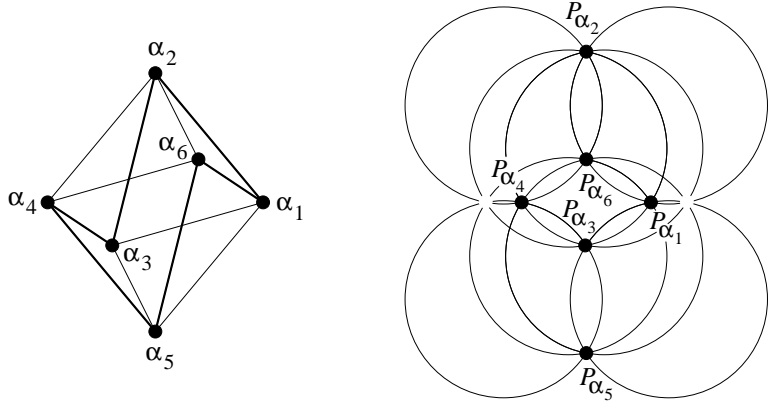}}
\caption{An oriented hexagon 
$(\alpha_1,\alpha_2,\alpha_3,\alpha_4,\alpha_5,\alpha_6)$ associated with the
multi-ratio condition $M(P_{\alpha_1},P_{\alpha_2},P_{\alpha_3},P_{\alpha_4},
P_{\alpha_5},P_{\alpha_6}) = -\mathbbm{1}$}
\label{hexagon}
\end{figure}

\begin{lemma}\label{perm}
  The multi-ratio conditions (\ref{M1}) and (\ref{M2}) are either invariant or
  mapped to each other by the associated octahedral symmetry group. 
  In particular, the subgroup of symmetries which leave the multi-ratio
  conditions invariant consists of the permutations of the indices 
  $1,2,3,4$.
\end{lemma}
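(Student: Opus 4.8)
The plan is to split the octahedral symmetry group $G$ (of order $48$) as $G=S_4\times\langle\iota\rangle$ and deal with the two factors separately. Here $S_4$ acts by relabelling the index set, $\Phi_{ik}\mapsto\Phi_{\pi(i)\pi(k)}$ — an octahedral symmetry, since $\pi$ preserves the incidence structure of the $2$-element subsets of $\{1,2,3,4\}$ (two of them joined by an edge iff they meet) — while $\iota$ is the antipodal (hence central) involution $\Phi_{ik}\mapsto\Phi_{jl}$, $\{i,k,j,l\}=\{1,2,3,4\}$. One checks that no $\pi\in S_4$ realises $\iota$, so $[G:S_4]=2$ and the asserted decomposition holds. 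It therefore suffices to determine the action of $\iota$ and to show that every $\pi\in S_4$ leaves each of (\ref{M1}) and (\ref{M2}) invariant.

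For $\iota$ I would compute directly. Applying $\iota$ to the sextuple in (\ref{M1}) produces $(\Phi_{23},\Phi_{34},\Phi_{13},\Phi_{14},\Phi_{12},\Phi_{24})$, i.e.\ that sextuple advanced cyclically by three places. Writing $P_1,\dots,P_6$ for the entries of (\ref{M1}) and $d_j=P_j-P_{j+1}$ (indices mod $6$), the transformed left-hand side equals $M(P_4,P_5,P_6,P_1,P_2,P_3)=d_4d_5^{-1}d_6d_1^{-1}d_2d_3^{-1}$, which is a cyclic rearrangement — hence a conjugate — of $d_1^{-1}d_2d_3^{-1}d_4d_5^{-1}d_6=\tilde M(P_1,\dots,P_6)^{-1}$. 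Since conjugation and inversion both fix $-\mathbbm{1}$, and $\tilde M(P_1,\dots,P_6)$ is exactly the left-hand side of (\ref{M2}), it follows that $\iota$ carries (\ref{M1}) to (\ref{M2}), and being an involution it carries (\ref{M2}) to (\ref{M1}). (This agrees with (\ref{E18})--(\ref{E20}).)

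For $S_4$ I would argue structurally rather than permutation by permutation. The opposite-point and opposite-circle relations, the correspondence of circles, the orientation convention and Definition~\ref{cliff} are all purely combinatorial, so $\pi\in S_4$ maps generalized Clifford configurations to generalized Clifford configurations; by Theorem~\ref{main} the latter are exactly the points of $\mathcal V_1\cup\mathcal V_2$, where $\mathcal V_m\subset\mathbbm{H}^6$ is the solution set of (\ref{M1}) for $m=1$ and of (\ref{M2}) for $m=2$. Solving the multi-ratio equation for a single point exhibits each $\mathcal V_m$ as the closure of the graph of a rational map $\mathbbm{H}^5\to\mathbbm{H}$, hence irreducible of real dimension $20$; and $\mathcal V_1\neq\mathcal V_2$ since solving (\ref{M1}) and (\ref{M2}) for $\Phi_{34}$ yields word-reversed — and therefore generically distinct — rational expressions in the remaining five points. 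Thus $\{\mathcal V_1,\mathcal V_2\}$ is the set of irreducible components of $\mathcal V_1\cup\mathcal V_2$, and the linear coordinate permutation $\pi$, which preserves this union, must permute $\{\mathcal V_1,\mathcal V_2\}$; this defines a homomorphism $\varphi$ from $S_4$ to the group of permutations of $\{\mathcal V_1,\mathcal V_2\}$. One explicit check now settles its kernel: the transposition $(23)$ sends the sextuple of (\ref{M1}) to $(\Phi_{14},\Phi_{13},\Phi_{34},\Phi_{23},\Phi_{24},\Phi_{12})$, which is that sextuple reversed with its first entry fixed, and reversing a sextuple with its first entry fixed replaces $M$ by $M^{-1}$; hence (\ref{M1}) is preserved and the odd permutation $(23)$ lies in $\ker\varphi$. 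Since the only normal subgroup of $S_4$ containing an odd permutation is $S_4$ itself, $\varphi$ is trivial, i.e.\ every $\pi\in S_4$ fixes both (\ref{M1}) and (\ref{M2}).

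Assembling the two parts: each element of $G=S_4\sqcup S_4\iota$ fixes both conditions (if it lies in $S_4$) or interchanges them (if it lies in $S_4\iota$), which is the assertion, the subgroup leaving the conditions invariant being exactly $S_4$. The main obstacle is the step flagged as structural: ruling out that a relabelling could carry part of the solution set of (\ref{M1}) onto (\ref{M1}) and part onto (\ref{M2}). It is precisely Theorem~\ref{main}, together with the irreducibility and the dimension count, that forbids this. A purely algebraic substitute — verifying that $(12),(23),(34)$ each preserve (\ref{M1}) — would be transparent for $(12)$ and $(23)$ (they merely reverse the underlying hexagon, so replace $M$ by $M^{-1}$) but would require for $(34)$ a non-obvious identity between the multi-ratios of two genuinely different admissible hexagons, which is exactly what the geometric route bypasses.
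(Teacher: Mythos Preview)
Your decomposition $G=S_4\times\langle\iota\rangle$ is sound, and your treatment of $\iota$ is correct and arguably cleaner than the paper's choice of a 90-degree rotation as coset representative. The problem lies in the $S_4$ part.

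Your ``structural'' argument for $S_4$ invokes Theorem~\ref{main} to identify the set of generalized Clifford configurations with $\mathcal V_1\cup\mathcal V_2$. But in the paper's logical order, Lemma~\ref{perm} is a \emph{prerequisite} for Theorem~\ref{main}: the proof of Theorem~\ref{main} uses Lemma~\ref{perm} twice, first to recognise the right-hand sides in (\ref{E34}) as $-\mathbbm 1$ under the respective hypotheses (\ref{M1}) or (\ref{M2}), and then to propagate the equivalence of $\Phi_{12}$ and $\Phi_{13}$ to all pairs $\Phi_{\pi(1)\pi(2)},\Phi_{\pi(1)\pi(3)}$. So your argument is circular. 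You cannot use the geometric characterisation of $\mathcal V_1\cup\mathcal V_2$ before Lemma~\ref{perm} is established.

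This means the ``non-obvious identity'' you hoped to bypass cannot be avoided. The paper confronts it head-on: it picks one permutation, $(1,2,3,4)\to(4,2,3,1)$, that genuinely reshuffles the hexagon (your $(34)$ would do equally well), rewrites the resulting multi-ratio condition as (\ref{E30b}), and then applies the pair of matrix identities (\ref{E30c}) --- each an instance of the three-term identity $(A-B)(B-C)^{-1}(C-D)=(A-C)(B-C)^{-1}(B-D)+D-A$ --- to reduce it back to (\ref{M1}). Combined with the easy reversal/cyclic cases you already handled (which cover the transpositions $(12)$ and $(23)$), this one computation generates all of $S_4$. Your observation that $(12)$ and $(23)$ act by reversal-plus-conjugation is correct and useful, but you still need one honest algebraic identity to close the argument; (\ref{E30c}) is that identity.
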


\begin{proof}
On the one hand, a typical permutation which leads to a 
non-trivial ordering of the arguments in the multi-ratio conditions is given by
$(1,2,3,4)\rightarrow(4,2,3,1)$. The associated multi-ratio condition
\bela{E30a}
  M(\Phi_{14},\Phi_{24},\Phi_{12},\Phi_{23},\Phi_{13},\Phi_{34}) = -\mathbbm{1}
\ela
may be brought into the form
\bela{E30b}
  (\Phi_{14}-\Phi_{24})(\Phi_{24}-\Phi_{12})^{-1}(\Phi_{12}-\Phi_{23})
 + (\Phi_{34}-\Phi_{14})(\Phi_{13}-\Phi_{34})^{-1}(\Phi_{23}-\Phi_{13})
 = 0.
\ela
The identities
\bela{E30c}
 \bear{c}
   (\Phi_{14}-\Phi_{24})(\Phi_{24}-\Phi_{12})^{-1}(\Phi_{12}-\Phi_{23})
   \qquad\qquad\qquad\qquad\as 
   \qquad\qquad
   =(\Phi_{14}-\Phi_{12})(\Phi_{24}-\Phi_{12})^{-1}(\Phi_{24}-\Phi_{23})
   + \Phi_{23} - \Phi_{14}\as
   (\Phi_{34}-\Phi_{14})(\Phi_{13}-\Phi_{34})^{-1}(\Phi_{23}-\Phi_{13})
   \qquad\qquad\qquad\qquad\as 
   \qquad\qquad
   =(\Phi_{13}-\Phi_{14})(\Phi_{13}-\Phi_{34})^{-1}(\Phi_{23}-\Phi_{34})
   + \Phi_{14} - \Phi_{23}
 \ear
\ela
then lead to 
\bela{E30d}
  (\Phi_{14}-\Phi_{12})(\Phi_{24}-\Phi_{12})^{-1}(\Phi_{24}-\Phi_{23})
 + (\Phi_{13}-\Phi_{14})(\Phi_{13}-\Phi_{34})^{-1}(\Phi_{23}-\Phi_{34})
 = 0
\ela
which, in turn, is the original multi-ratio condition (\ref{M1}).

On the other hand, composition of the subgroup of permutations with any
`rotation by 90 degrees' of the octahedron generates the remaining 24
discrete symmetries. For instance, the rotation 
$({14},{12},{24},{23},{34},{13})
\rightarrow 
({24},{12},{23},{13},{34},{14})$
produces the multi-ratio condition
\bela{E30e}
  M(\Phi_{24},\Phi_{12},\Phi_{23},\Phi_{13},\Phi_{34},\Phi_{14}) = -\mathbbm{1}
\ela
which is, by definition, equivalent to 
\bela{E30f}
  \tilde{M}(\Phi_{14},\Phi_{24},\Phi_{12},\Phi_{23},\Phi_{13},\Phi_{34}) 
  = -\mathbbm{1}.
\ela
The latter is merely the tilde version of
(\ref{E30a}) and hence equivalent to the multi-ratio condition 
(\ref{M2}).
\end{proof}

We are now in a position to prove theorem \ref{main}.

\begin{proof} {\sl (of theorem \ref{main})} 
By virtue of theorem \ref{lemma1} and the fact that either
of the multi-ratio conditions (\ref{M1}) or (\ref{M2}) may be regarded as a
definition of a point in terms of five arbitrary points, it remains to show
that the multi-ratio conditions indeed give rise to generalized Clifford
configurations. Thus, we here assume that the six points $\Phi_{ik}$ of an
octahedral point-circle configuration are constrained by one of the multi-ratio
conditions and choose the tangent vectors $V^{\mu}_{12}$ as in the proof of
lemma \ref{lemma1}. Accordingly, a set of corresponding tangent vectors
$V^{\mu}_{13}$ of the correct orientation at the point $\Phi_{13}$ is given by
(cf.\ figure \ref{notation})
\bela{E31}
   V_{13}^1 = V_{23,13,12},\quad V_{13}^2 = V_{12,13,14},\quad
   V_{13}^3 = V_{14,13,34},\quad V_{13}^4 = V_{34,13,23}.
\ela
The expressions (\ref{D1}) and (\ref{D2}) for the quantities $\Delta$ and
$\hat{\Delta}_1,\hat{\Delta}_2$ suggest that one should consider the
orientation- and angle-preserving mappings
\bela{E32}
  \bear{l}
    \mathcal{O}_1 : X \mapsto (\Phi_{13}-\Phi_{12})(V^2_{12})^{-1}X
                             (V^1_{12})^{-1}(\Phi_{13}-\Phi_{12})\as

    \mathcal{O}_2 : X \mapsto (\Phi_{13}-\Phi_{12})(V^1_{12})^{-1}X
                             (V^2_{12})^{-1}(\Phi_{13}-\Phi_{12}).
   \ear
\ela
Indeed, it is readily verified that
\bela{E33}
   V^{\nu}_{13} = \mathcal{O}_i(V^{\nu}_{12}),\quad \nu=1,2 
\ela
(cf.\ (\ref{E26})) and a short calculation reveals that
\bela{E34}
   \bear{rl}
     \mathcal{O}_1(V^{3}_{12})(V^{3}_{13})^{-1} & =
     - M(\Phi_{13},\Phi_{23},\Phi_{12},\Phi_{24},\Phi_{14},\Phi_{34})\as
     (V^{4}_{13})^{-1}\mathcal{O}_1(V^{4}_{12}) & =
     - \tilde{M}(\Phi_{13},\Phi_{14},\Phi_{12},\Phi_{24},\Phi_{23},\Phi_{34})\as
     (V^{3}_{13})^{-1}\mathcal{O}_2(V^{3}_{12}) & =
     - \tilde{M}(\Phi_{13},\Phi_{23},\Phi_{12},\Phi_{24},\Phi_{14},\Phi_{34})\as
     \mathcal{O}_2(V^{4}_{12})(V^{4}_{13})^{-1} & =
     - M(\Phi_{13},\Phi_{14},\Phi_{12},\Phi_{24},\Phi_{23},\Phi_{34}).

   \ear
\ela

Two conclusions may now be drawn from lemma \ref{perm}. Firstly, 
the conditions (\ref{M1}) and (\ref{M2}) are equivalent to
\bela{E35}
 \bear{rl}
  M(\Phi_{13},\Phi_{23},\Phi_{12},\Phi_{24},\Phi_{14},\Phi_{34})& 
  = -\mathbbm{1}\as
  \Leftrightarrow\quad
  \tilde{M}(\Phi_{13},\Phi_{14},\Phi_{12},\Phi_{24},\Phi_{23},\Phi_{34}) & 
  = -\mathbbm{1}
 \ear
\ela
and
\bela{E36}
 \bear{rl}
  \tilde{M}(\Phi_{13},\Phi_{23},\Phi_{12},\Phi_{24},\Phi_{14},\Phi_{34}) &
  = -\mathbbm{1}\as
  \Leftrightarrow\quad
  M(\Phi_{13},\Phi_{14},\Phi_{12},\Phi_{24},\Phi_{23},\Phi_{34}) & 
  = -\mathbbm{1}
 \ear
\ela
respectively and hence it follows that
\bela{E37}
   V^{\mu}_{13} = \mathcal{O}_1(V^{\mu}_{12})\quad{\rm or}\quad
   V^{\mu}_{13} = \mathcal{O}_2(V^{\mu}_{12}),\quad \mu=1,2,3,4
\ela
depending on whether (\ref{M1}) or (\ref{M2}) is assumed to hold. Thus, the
points $\Phi_{12}$ and $\Phi_{13}$ are equivalent in the sense of definition
\ref{cliff} since
\bela{E38}
   \angle(V^{\mu}_{12},V^{\mu'}_{12}) = \angle(V^{\mu}_{13},V^{\mu'}_{13}),
\quad \mu,\mu'\in\{1,2,3,4\}.
\ela
Secondly, since the points $\Phi_{12}$ and $\Phi_{13}$ are equivalent, the
points $\Phi_{\pi(1)\pi(2)}$ and $\Phi_{\pi(1)\pi(3)}$ are equivalent for any
permutation $\pi$ of the indices $1,2,3,4$. Hence, all points are equivalent. 
This completes the proof.
\end{proof}

We conclude this section with the remark that if a point of a generalized
Clifford configuration which is not conformally equivalent to
a classical Clifford configuration is inverted with respect to the hypersphere 
which passes through the other five points then another generalized Clifford 
configuration is obtained. Thus, theorem~\ref{main} implies the following 
corollary:

\begin{corollary}
If two generalized Clifford configurations defined by
\bela{E39}
 \bear{l}
      M(\Phi_{14},\Phi_{12},\Phi_{24},\Phi_{23},\Phi_{34},\Phi_{13}) =
      -\mathbbm{1}\as
      \tilde{M}(\Phi_{14},\Phi_{12},\Phi_{24},\Phi_{23},
                          \tilde{\Phi}_{34},\Phi_{13}) = -\mathbbm{1}
     \ear
\ela
are not conformally equivalent to a classical Clifford configuration then the
points $\Phi_{34}$ and $\tilde{\Phi}_{34}$ 
are related by inversion with respect to the (well-defined)
hypersphere passing through
the common points $\Phi_{13},\Phi_{14},\Phi_{12},\Phi_{24},\Phi_{23}$.
\end{corollary}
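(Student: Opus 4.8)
The plan is to reduce the statement to Theorem~\ref{main} together with the ``Case~2'' analysis of Theorem~\ref{lemma1}. The two configurations in \eqref{E39} share the five points $\Phi_{12},\Phi_{13},\Phi_{14},\Phi_{23},\Phi_{24}$ and the four associated circles through $\Phi_{12}$, so by Theorem~\ref{lemma1} they are the \emph{two} distinct generalized Clifford configurations attached to these five points, one governed by the right-multi-ratio condition and the other by the left-multi-ratio condition on the same ordered tuple. The first step is to observe that the hypothesis ``not conformally equivalent to a classical Clifford configuration'' rules out the degenerate situations: if the tangent vectors $V^{\mu}_{12}$ spanned only a $2$-plane, then (Case~1) the five points would lie on a $2$-sphere and, by the parity argument given there, the unique generalized Clifford configuration would be the classical one, contradicting the hypothesis. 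Hence we are in Case~2: the vectors $V^1_{12},V^2_{12},V^3_{12}$ span a genuine three-dimensional subspace, which I will call $E$, and the two admissible unit vectors are $\hat\Delta_1=\hat V^1_{12}(\hat V^3_{12})^{-1}\hat V^2_{12}$ and $\hat\Delta_2=\hat V^2_{12}(\hat V^3_{12})^{-1}\hat V^1_{12}$ as in \eqref{D2}.

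Next I would make precise that the point $\Phi_{34}$ of a configuration is, via \eqref{D1}, an explicit (conformal, hence circle-preserving) function of the corresponding $\hat\Delta$: from $\Delta=(\Phi_{13}-\Phi_{12})(V^3_{13})^{-1}(\Phi_{13}-\Phi_{12})$ one recovers $V^3_{13}$, then $V^4_{13}$ by the linear-dependence-and-angles argument of Theorem~\ref{lemma1}, and finally $\Phi_{34}$ lies at the second intersection of the circle $S^4_{13}$ (through $\Phi_{23},\Phi_{13}$ with tangent $V^4_{13}$) with the circle through $\Phi_{24},\Phi_{14}$ carrying the correspondingly propagated tangent. Since $\hat\Delta_1$ and $\hat\Delta_2$ are, as noted after \eqref{D2}, related by reflection in $E$, it suffices to show that this whole reconstruction map intertwines reflection-in-$E$ at the $\hat\Delta$ level with inversion-in-$\mathcal H$ at the $\Phi_{34}$ level, where $\mathcal H$ is the hypersphere through the five common points. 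The cleanest route is to normalise by a M\"obius transformation: send $\Phi_{12}$ to $0$, so that $\mathcal H$ becomes a hyperplane through the origin, and choose it so that $\mathcal H$ is exactly the hyperplane spanned by the tangent directions $V^\mu_{12}$ at $\Phi_{12}$ (this is the tangent hyperplane to $\mathcal H$ at $\Phi_{12}$, which passes through the origin once $\Phi_{12}=0$). In these coordinates all five common points lie in $\mathcal H$, and reflection in $\mathcal H$ is a linear orthogonal involution.

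The key computational step is then to check that, in this normalised picture, the pair $(\mathcal O_1,\mathcal O_2)$ of \eqref{E32} is conjugate to ``reflection in $\mathcal H$ composed with $\mathcal O_1$'': concretely, that $\mathcal O_2 = R\circ\mathcal O_1$ where $R$ is the orthogonal reflection in the three-dimensional subspace $E=\operatorname{span}\{V^1_{12},V^2_{12},V^3_{12}\}$, which in the normalised coordinates is exactly reflection in the hyperplane $\mathcal H$. This is a short quaternionic identity: writing $P=\hat V^1_{12}(\hat V^3_{12})^{-1}$ and $Q=\hat V^2_{12}(\hat V^3_{12})^{-1}$, one has $\mathcal O_i$ conjugate to left-multiplication built from $\hat\Delta_i$, and $\hat\Delta_2\hat\Delta_1^{-1}$ acts as the reflection swapping the two. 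Given this, the whole reconstruction — being a composition of these angle- and orientation-preserving maps together with the purely combinatorial ``second intersection point'' operations, all of which commute with the ambient reflection in $\mathcal H$ — carries $\hat\Delta_1\mapsto\Phi_{34}$ and $\hat\Delta_2\mapsto R(\Phi_{34})=\tilde\Phi_{34}$. Undoing the normalising M\"obius transformation converts $R$ back into inversion in $\mathcal H$, which is the assertion. The main obstacle I anticipate is bookkeeping: verifying that the tangent-vector propagation rule \eqref{E24} and the ``second intersection'' step really are equivariant under reflection in $\mathcal H$ (one must check that $\Phi'-\Phi$ lies in the relevant tangent hyperplane, so that conjugation by $\Phi'-\Phi$ and reflection in $\mathcal H$ commute up to the expected sign), and that the \emph{same} branch $\hat\Delta$ is consistently selected at $\Phi_{13}$ and at every other vertex — but this last point is exactly what was established in the proof of Theorem~\ref{main} via Lemma~\ref{perm}, so it can be quoted rather than redone.
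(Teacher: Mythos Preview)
Your route is workable in principle but much heavier than the paper's, and it contains a slip in the normalisation step: sending $\Phi_{12}$ to $0$ does \emph{not} turn the hypersphere $\mathcal{H}$ into a hyperplane. A $3$-sphere becomes a hyperplane under a M\"obius transformation only when a point \emph{on} it is sent to infinity; with $\Phi_{12}\mapsto 0$ the tangent hyperplane to $\mathcal{H}$ at $\Phi_{12}$ passes through the origin, but $\mathcal{H}$ itself remains a genuine sphere, and the identification ``$E=\mathcal{H}$'' you rely on fails. This can be repaired (send $\Phi_{12}$ to infinity instead, so that $\mathcal{H}$ becomes an affine $3$-plane and the circles through $\Phi_{12}$ become straight lines in the directions $V^{\mu}_{12}$), after which your equivariance-tracking argument can be pushed through.

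The paper, however, bypasses all of this reconstruction machinery. Its argument is a two-line application of conformal invariance together with \S5: inversion in $\mathcal{H}$ is a conformal transformation which fixes the five common points, so it maps the given generalized Clifford configuration to another generalized Clifford configuration with the same five points (the defining property in Definition~\ref{cliff} is purely in terms of angles). Since inversion in a hypersphere is orientation-reversing, the observation at the end of \S5 that $\mathcal{C}\circ\mathcal{M}$ interchanges the conditions $M=-\mathbbm{1}$ and $\tilde M=-\mathbbm{1}$ shows that the image satisfies the $\tilde M$-condition. Theorem~\ref{main} then forces this image to coincide with the second configuration in \eqref{E39}, i.e.\ $\tilde\Phi_{34}$ is precisely the inversion of $\Phi_{34}$ in $\mathcal{H}$. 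The ``not classical'' hypothesis enters only to guarantee (via the Case~1/Case~2 dichotomy of Theorem~\ref{lemma1}) that the five points actually span a hypersphere and that the two configurations are distinct. What your approach buys is an explicit mechanism at the level of the $\hat\Delta_i$ and the maps $\mathcal{O}_i$; what the paper's approach buys is that one never has to open that box at all.
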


Interestingly, if one chooses the points 
$\Phi_{13},\Phi_{14},\Phi_{12},\Phi_{24},\Phi_{23}$ in such a way that 
$\tilde{\Phi}_{34}$ lies `at infinity' then the above corollary implies that 
the point $\Phi_{34}$ constitutes the centre of the hypersphere which passes 
through these five points. 

\section{The quaternionic discrete Schwarzian KP equation}

In Konopelchenko \& Schief (2002), it has been demonstrated that the multi-ratio
condition (\ref{E1}) interpreted as a lattice equation is nothing but a
Schwarzian version of the discrete Kadomtsev-Petviashvili (dKP) equation which
constitutes a `master equation' in the theory of integrable systems 
(Ablowitz \& Segur 1981; Zakharov \textit{et al.} 1980). 
Accordingly, the dSKP equation admits a geometric interpretation in terms of
classical Clifford configurations. Here, we present an analogous construction
of three-dimensional `Clifford lattices' in a four-dimensional Euclidean space
associated with the quaternionic multi-ratio condition. 
Due to the absence of any
additional points of intersection of the circles belonging to a generic 
generalized Clifford configuration, not only is the quaternionic case more 
natural but it also includes the afore-mentioned scalar case.   

We consider lattices of the combinatorics of a face-centred cubic (fcc) lattice
in a four-dimensional Euclidean space, that is maps of the form
\bela{E40}
  \Phi : \mathbbm{F}\rightarrow \mathbbm{H},\quad
  \mathbbm{F} = \{(n_1,n_2,n_3)\in\mathbbm{Z}^3: n_1 + n_2 + n_3 \mbox{ odd}\}.
\ela   
\begin{figure}
\centerline{\includegraphics[width=0.45\textwidth]{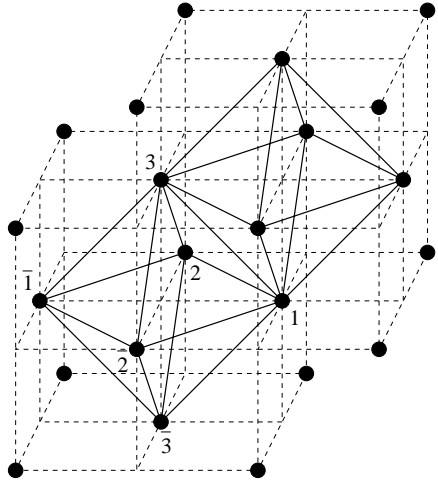}}
\caption{Two octahedra embedded in the fcc lattice $\mathbbm{F}$}
\label{cube}
\end{figure}
Any six points of $\mathbbm{F}$ which constitute the centres of the faces of
a cube composed of eight adjacent elementary cubes of $\mathbbm{Z}^3$ 
may be regarded as the vertices of an octahedron (cf.\ figure \ref{cube}). 
In this way, the fcc lattice may be identified with the vertices of a 
collection of octahedra which meet at common edges. Accordingly, 
$\Phi(\mathbbm{F})$ constitutes a set of points belonging to octahedral
point-circle configurations. It is therefore natural to require
that these point-circle configurations be of generalized Clifford type. In 
terms of the labelling (\ref{E40}), this implies that for any octahedron 
centred at $(\nu_1,\nu_2,\nu_3)\in\mathbbm{Z}^3$ with 
$\nu_1+\nu_2+\nu_3$ even, one of the multi-ratio conditions
\bela{E41}
  M(\Phi_{\bar{1}},\Phi_{2},\Phi_{\bar{3}},\Phi_{1},\Phi_{\bar{2}},\Phi_{3})
  = -\mathbbm{1}
\ela
and
\bela{E42}
  \tilde{M}
  (\Phi_{\bar{1}},\Phi_{2},\Phi_{\bar{3}},\Phi_{1},\Phi_{\bar{2}},\Phi_{3})
  = -\mathbbm{1}
\ela
obtains, where the arguments of $\Phi$ have been suppressed and the notation
\bela{E43}
  \Phi = \Phi(\nu_1,\nu_2,\nu_3),\quad
  \Phi_{\bar{1}} =  \Phi(\nu_1-1,\nu_2,\nu_3),\quad
  \Phi_{2} = \Phi(\nu_1,\nu_2+1,\nu_3),\quad\ldots
\ela
has been adopted. Two well-defined maps are obtained by demanding that the
`same' multi-ratio condition is imposed on all octahedra. Thus, we say that the
map $\Phi$ defines a {\em Clifford lattice} in $\mathbbm{H}$ if either
(\ref{E41}) or (\ref{E42}) regarded as a lattice equation holds. It is noted 
that these two equations are essentially identical in the sense that if 
$\Phi(n_1,n_2,n_3)$ is a solution of (\ref{E41}) then 
$\Phi(-n_1,-n_2,-n_3)$ is a solution of (\ref{E42}) and vice versa.
  
If the fcc lattice $\mathbbm{F}$ is mapped to a simple cubic lattice 
$\mathbbm{Z}^3$ via the relabelling 
\bela{E44}
  \mathbbm{F}\ni (n_1,n_2,n_3)\quad\leftrightarrow\quad
                 (m_1,m_2,m_3)\in\mathbbm{Z}^3
\ela
defined by
\bela{E45}
  n_1 = m_2 + m_3 - 1,\quad n_2 = m_1 + m_3 - 1,\quad n_3 = m_1 + m_2 - 1
\ela
then the lattice equation (\ref{E41}) assumes the standard form of the 
quaternionic reduction 
\bela{E46}
  M(\Phi_\va{1},\Phi_{\va{1}\va{3}},\Phi_\va{3},\Phi_{\va{2}\va{3}},
  \Phi_\va{2},\Phi_{\va{1}\va{2}}) = -\mathbbm{1}
\ela
of the integrable multi-component dSKP equation 
(Bogdanov \& Konopelchenko 1998), wherein the subscripts on $\Phi$ now 
refer to unit increments of the variables $m_1,m_2,m_3$.

The conformal geometry of the qdSKP equation (\ref{E46}) and the associated
(discrete) Schwarzian Davey-Stewartson II hierarchy has been discussed
in detail in Konopelchenko \& Schief (2005). Therein, it has been shown 
that any standard discrete isothermic surface 
(Bobenko \& Pinkall 1996) may be extended via
a translational symmetry to a three-dimensional lattice in such a manner that
a (degenerate) Clifford lattice is obtained. In particular, Clifford lattices 
encapsulate discrete surfaces of constant mean curvature and discrete minimal
surfaces. Thus, apart from its significance in connection with (generalized)
Clifford configurations, the qdSKP equation also plays an important role in the
area of integrable discrete differential geometry (Bobenko \& Seiler 1999). 

The geometric integrability of the qdSKP equation (\ref{E46}) may be shown 
by embedding generalized Clifford configurations in four-dimensional lattices
of suitable combinatorics so that any three-dimensional `slice' constitutes
a generalized Clifford lattice. In the case of classical 
Clifford configurations, Desargues' classical theorem (Pedoe 1970) turns out to
be the key to the construction of a well-posed Cauchy problem for planar
Clifford lattices. However, a reformulation
of Desargues' theorem in terms of angles is required in order to formulate
a well-posed Cauchy problem for four-dimensional Clifford lattices in  
$\mathbbm{H}$. In this manner, the standard Lax representation 
(Bogdanov \& Konopelchenko 1998) and a B\"acklund 
transformation for the qdSKP equation may be derived geometrically.
A detailed discussion of this topic goes beyond the scope of
this paper and is consigned to a separate publication.

\begin{thedemobiblio}\smallskip{}
\item Ablowitz, M. J. \& Segur, H. 1981 \textit{Solitons and the 
  inverse scattering transform}, Philadelphia: SIAM.
\item Ahlfors, L. V. 1981 \textit{M\"obius transformations in several 
  dimensions}, Ordway Professorship Lectures in Mathematics, Univ.\ 
  Minnesota.
\item Bobenko, A. I. \& Pinkall, U. 1996 Discrete
  isothermic surfaces. \textit{J. reine angew. Math.} \textbf{475}, 187--208.
\item Bobenko, A. I. \& Seiler, R. (eds) 1999 \textit{Discrete
  integrable geometry and physics}, Oxford: Clarendon Press. 
\item Bogdanov, L. V. \& Konopelchenko, B. G. 1998 
  Analytic-bilinear approach to integrable hierarchies. II.\ Multicomponent KP 
  and 2D Toda lattice hierarchies. \textit{J. Math. Phys.} \textbf{39},
  4701--4728. 
\item Brannan, D. A., Esplen, M. F. \& Gray, J. J. 1999 \textit{Geometry}, 
  Cambridge University Press.
\item Brown, L. M. 1954 A configuration of points and spheres in
  four-dimensional space. \textit{Proc. R. Soc. Edinb.} A \textbf{64}, 145--149.
\item Clifford, W. K. 1871 A synthetic proof of Miquel's theorem.
  \textit{Oxford, Cambridge and Dublin Messenger of Math.} \textbf{5}, 
  124--141.
\item Cox, H. 1891 Application of Grassmann's Ausdehnungslehre to 
  properties of circles. \textit{Quart. J. Pure Appl. Math.} \textbf{25}, 1--71.
\item Coxeter, H. S. M. 1956 Review of Brown (1954). \textit{Math. Rev.} 
  \textbf{17}, 886.
\item de Longechamps, G. 1877 Note de g\'eometrie. \textit{Nouvelle Corresp.  
  Math\`emat.} \textbf{3}, 306--312; 340--347.
\item Dorfman, I. Ya \& Nijhoff, F. W. 1991 On a 
  2+1-dimensional version of the Krichever-Novikov equation. \textit{Phys. 
  Lett.} \textbf{157A}, 107--112.
\item Dubrovin, B. A., Fomenko, A. T. \& Novikov, S. P. 1984
  \textit{Modern geometry -- methods and applications. Part I. The geometry of
  surfaces, transformation groups, and fields}, New York: Springer Verlag.
\item Godt, W. 1896 Ueber eine merkw\"urdige Kreisfigur. \textit{Math. 
  Ann.} \textbf{47}, 564--572.
\item Grace, J. H. 1898 Circles, spheres and linear complexes. \textit{Trans. 
  Camb. Phil. Soc.} \textbf{16}, 153--190.
\item Hirota, R. 1981 Discrete analogue of a generalized Toda equation.
  \textit{J. Phys. Soc. Japan} \textbf{50}, 3785--3791.
\item Koecher, M. \& Remmert, R. 1991 Hamilton's 
  quaternions. In \textit{Numbers}  (ed. H.-D. Ebbinghaus, H. Hermes, 
  F. Hirzebruch, M. Koecher, K. Mainzer, J. Neukirch, A. Prestel \& R. Remmert).
  Graduate Text in Mathematics/Readings in Mathematics, no. 123, pp. 189--220, 
  New York: Springer Verlag.
\item Konopelchenko, B. G. \& Schief, W. K. 2002  Menelaus' theorem, Clifford 
  configurations and inversive geometry of the Schwarzian KP hierarchy. 
  \textit{J. Phys. A: Math. Gen.} \textbf{35}, 6125--6144.
\item Konopelchenko B. G. \& Schief, W. K. 2005 Conformal geometry of the 
  (discrete) Schwarzian Davey-Stewartson II hierarchy. \textit{Glasgow Math. 
  J.} \textbf{47A}, 121--131.
\item Longuet-Higgins, M. S. 1972 Clifford's chain and its
  analogues in relation to the higher polytopes. \textit{Proc. R. Soc. Lond.} A
  \textbf{330}, 443--466.
\item Pedoe, D. 1970 \textit{A course of geometry}, Cambridge University
  Press.
\item Zakharov, V. E., Manakov, S. V., Novikov, S. P. \& Pitaevski, L. P.
  1980 \textit{Soliton theory. The inverse problem method}, Moscow: Nauka;
  1984 New York: Plenum Press. 
\item Ziegenbein, P. 1941 Konfigurationen in der Kreisgeometrie. 
  \textit{J. reine angew. Math.} \textbf{183}, 9--24.
\end{thedemobiblio}

\end{document}